\def\bbbr{{\rm I\kern-0.23em R}}
\newcommand{\eps}{\varepsilon}
\newcommand{\R}{\mathbb{R}}
\title{
Type III Responses to Transient Inputs in Hybrid Nonlinear Neuron Models}
\author{Jonathan Rubin\footnotemark[2], Justyna Signerska-Rynkowska\footnotemark[3], Jonathan D. Touboul\footnotemark[4]}
\begin{document}
\maketitle

\begin{abstract}
Experimental characterization of neuronal dynamics involves recording both of spontaneous activity patterns and of responses to transient and sustained inputs.
While much theoretical attention has been devoted to the spontaneous activity of neurons, less is known about the dynamic mechanisms shaping their responses to transient inputs, although these bear significant physiological relevance. Here, we study responses to transient inputs in a widely used class of neuron models (nonlinear adaptive hybrid models) well-known to reproduce a number of biologically realistic behaviors. We focus on responses to transient inputs that have been previously associated with Type III neurons, arguably the least studied category in Hodgkin's classification, which are those neurons that never exhibit continuous firing in response to sustained excitatory currents. The two phenomena that we study are  \emph{post-inhibitory facilitation}, in which an otherwise subthreshold excitatory input can induce a spike if it is applied with proper timing after an inhibitory pulse, and \emph{slope detection}, in which a neuron spikes to a transient input only when the input's rate of change is in a specific, bounded range. Using dynamical systems theory, we analyze the origin of these phenomena in nonlinear hybrid models. We provide a geometric characterization of dynamical structures associated with PIF in the system and an analytical study of slope detection for tent inputs. While the necessary and sufficient conditions for these behaviors are easily satisfied in neurons with Type III excitability, our proofs are quite general and valid for neurons that do not exhibit Type III excitability as well.  This study therefore provides a framework for the mathematical analysis of these responses to transient inputs associated with Type III neurons in other systems and for advancing our understanding of these systems' computational properties. 
\end{abstract}

\begin{keywords}
Type III excitability, transient responses, hybrid dynamical systems, slope detection, post-inhibitory facilitation.
\end{keywords}

\begin{AMS}
Primary: 37B55,  
34A38 
37C35  
Secondary: 37N25, 
34C60   
\end{AMS}

\section{Introduction}
In the middle of the past century, Hodgkin \cite{hodgkin1948} determined that crustacean axons could be distinguished into classes that are now known as Type I, Type II and Type III, based on their firing properties in response to sustained injected currents (cf. \cite{prescott2008} for a review). In this classification, Type I and Type II neurons fire repeatedly when sufficiently excited but differ in the properties of  the their f-I curves (firing frequency as a function of input amplitude). Type I neurons can respond to small currents with very slow rates; theoretically in this case, the f-I curve is continuous, and as current is decreased from the repeated firing regime, the neuronal firing frequency goes to 0.
 In contrast, Type II neurons cannot maintain arbitrarily slow firing rates, reflected in an f-I curve that undergoes an abrupt jump from 0 (for low currents) to some non-nonzero frequency, which need not be small, as soon as the neuron starts firing. Substantial work has been devoted to the study of these two types of \emph{excitability}; in particular, it was shown early on that in models, these behaviors arise when firing onset occurs via a saddle-node of invariant circles bifurcation or an Andronov-Hopf bifurcation, respectively~\cite{rinzel1998}. 

\newpage

Type III neurons, arguably the least studied class of neurons in this classification, have very distinct properties. In particular, neurons classified as Type III may exhibit transient spiking to current injection, but they will not fire continuously no matter how strong a sustained excitatory current is applied. This property has been considered as advantageous in settings, such as processing of certain auditory stimuli, where each individual spike carries significant meaning or where processing of a rapid stream of stimuli occurs and requires avoidance of overlapping response windows \cite{carr2010,clay2008,izhikevich:07,schnupp2009}.  Type III responses have also been observed in neurons in a variety of additional brain regions including multiple sites in the spinal cord as well as in the neocortex, likely associated with coincidence detection and timing-based coding (see \cite{prescott2008} and the references therein).  Rinzel and collaborators discuss a range of interesting input processing features that they describe as being associated with Type III neurons, including {\em post-inhibitory facilitation (PIF)}, {\em slope detection}, {\em phase locking}, and {\em coincidence detection} \cite{dodla,meng2012}.  PIF refers to the phenomenon in which an excitatory input that fails to induce a spike in a resting neuron can induce a spike when applied with some lag following an inhibitory input.  Note that PIF differs from post-inhibitory rebound, in which a neuron fires immediately upon removal of inhibition.  Slope detection and phase locking are properties associated with repetitive inputs.  Given an input such as a sinusoid, with cycles of rising and falling amplitude, a neuron exhibits slope detection if it only spikes to inputs for which the rate of change is in a specific, bounded range, and it displays phase locking if it only spikes during a certain bounded range of phases within each cycle. Finally, coincidence detection occurs when a neuron responds to two or more inputs if and only if these inputs occur close enough to each other in time.  

In fact, the conditions needed for these properties to arise, and their relationship to Type III responses to applied input, have not yet been established analytically. In this work, we give the first rigorous mathematical treatment of PIF and slope detection. To do so, we consider an accepted planar, hybrid neuronal model that combines continuous evolution of trajectories up to a spiking event, defined by the finite-time blow-up of the voltage variable, together with a discrete jump condition that resets positions of trajectories after spiking occurs.  
By analyzing the hybrid model with a spike threshold and reset, we avoid any ambiguity in what it means for a spike to be fired and we also remove the need to consider a global return mechanism that brings voltage back to baseline levels after it becomes elevated.
 Type III behavior is associated with the existence of a globally stable critical point at resting voltage levels for all levels of input.  For example, in planar systems with continuous vector fields such as the FitzHugh-Nagumo model with variables $(v,w)$ and a cubic nullcline for the voltage variable $v$, Type III behavior can result when the $w$-nullcline is a line at a fixed $v$ value that intersects the resting branch of the $v$-nullcline at all current levels.
Not surprisingly from a dynamical systems point of view, however, our results extend to show that PIF and slope detection do not require a vertical nullcline in the $(v,w)$ plane and can persist even if we vary the model parameters to allow the stable resting critical point to be lost as input increases; that is, we show that Type III responsiveness is not required for PIF and slope detection to arise.
 
The remainder of this paper is organized as follows.  Section~\ref{sec:Model} describes the model we will be studying throughout the paper and summarizes the main results that will be useful in our analysis. Section~\ref{sec:PIF} presents and establishes the dynamical mechanisms supporting PIF in the hybrid neuron model studied, while section~\ref{sec:SlopeDetection} elucidates the mechanisms supporting slope detection. We conclude this paper in section~\ref{sec:Discussion} with a discussion on these results and in particular how these may extend to other types of models.

\section{Neuron Model and Assumptions}\label{sec:Model}
We consider the following modification of the model considered in our previous work \cite{rubinDCDSI,rubinDCDSII}, featuring a voltage variable, $v$, and an adaptation  or recovery variable, $w$:
\begin{equation}
\label{eq:model}
\begin{array}{rcl}
v' & = & F(v)-w+I, \vspace{0.1in} \\
w' & = & bv-cw, 
\end{array}
\end{equation}
where $I$ is the input current.  Often $I$ is considered as a constant but in this paper it will vary over time and may include both positive/excitatory and negative/inhibitory components.
We augment system (\ref{eq:model}) with a discrete reset condition implemented when a trajectory achieves a threshold condition corresponding to spiking, which we discuss further below.
In previous work on models similar to (\ref{eq:model}), the adaptation dynamics has been  given by $w'=\epsilon (bv-w)$, with the explicit timescale parameter $\epsilon$ included to emphasize a possible separation between the timescales of the two variables.  We choose the form given in (\ref{eq:model}) to allow for the case $c=0$ while also allowing the possibility that $w$ could be slow, with $b, c$ small, or not. 
We assume that the parameters and function $F(v)$ in (\ref{eq:model}) satisfy the following assumptions:
\begin{description}
	\item{(A1)} The parameters $b, c$ satisfy $b>0$ and $c\geq 0$.  
	\item{(A2)} $F$ is a convex function with $F'(v)<0$ for $v$ small enough, so that $F$ reaches its global minimum  at a point $(v_f,w_f)$. Moreover, $F(v) > 0$ for all $v>v_0$ for some finite $v_0$, with $\int_{v_0}^{\infty} \frac{dx}{F(x)} < \infty$. 
	
\item{(A3)} For $I=0$, on the left branch of the $v$-nullcline, system (\ref{eq:model}) has a single critical point, which we denote by $(v^*,w^*)$, and it is asymptotically stable. Moreover, system (\ref{eq:model}) does not support a stable periodic orbit.
\end{description} 

By (A2), $v$ will blow up in finite time from some initial conditions.  The blow-up times define firing events and are followed by an instantaneous reset of $v$ and an update of $w$. Often, it is assumed that $F$ grows faster than a quadratic function at infinity in the sense that there exists $\mu>0$ such that $\frac{F(v)}{v^{2+\mu}} \to \infty$ when $v\to \infty$.
In this case, the $w$ remains finite when $v$ blows up~\cite{touboul2009,touboul:09}, and the following reset condition is used:
\begin{equation}
\label{eq:reset}
v(t) \to \infty \; \mbox{as} \; t \uparrow t_0 \; \Rightarrow v(t_0^+) = v_R, w(t_0^+)=w(t_0^-)+w_R
\end{equation}
 for parameters $v_R, w_R$.  
Without this extra assumption on $F$, $w$ may blow up along with $v$, which is problematic for setting a threshold and reset condition~\cite{touboul:09}. For our analysis, the issue of blow-up of $w$ is irrelevant, because the phenomena we consider relate to whether a spike is fired at all, not to what happens after a spike is fired, so the extra assumption on $F$ is not needed.  In our numerics, as in past papers, we shall consider for definiteness the quartic model $F(v)=v^4+\lambda v$ for some $\lambda\in\R$ or a close variant of this model.  

Model (\ref{eq:model}) provides an overarching framework for studying nonlinear adaptive integrate-and-fire neurons, including the classical quadratic~\cite{izhikevich:03,izhikevich:04,izhikevich:07} and exponential~\cite{brette-gerstner:05} models. The dynamics of these systems were studied in detail for $c=1$, and they share a number of common properties.  Notably, their subthreshold dynamics is organized  around a Bogdanov-Takens bifurcation, and they thus all display a saddle-node and a Hopf bifurcation as parameters are varied~\cite{touboul:08}. In particular, when $I$ is large enough, tonic spiking arises when the resting state either loses stability through a  Hopf bifurcation, yielding Type II excitability, or disappears through a saddle-node bifurcation, yielding Type I excitability. 

The dynamics when $c=0$ 
is not topologically equivalent to the case with $c>0$. In particular, the system features a single equilibrium for any value of the input (whereas, for $c>0$, it features between $0$  and $2$ equilibria depending on $I_0$). This equilibrium is given by $v^{*}=0$ and $w^{*}=F(0)+I$. The Jacobian matrix at this fixed point has trace $F'(0)$ and determinant $b>0$, and its stability therefore does not depend on the input parameter. As soon as $F'(0)<0$, the system features a unique, stable fixed point, for any value of the input $I$, a characteristic feature of Type III excitability. In that canonical regime with $c=0$, one could investigate responses to transient inputs when (\ref{eq:reset}) applies with $w_R>0$, through characterizing in particular the attraction basin of the stable fixed point (see~\cite{touboul:09}). Instead of going into this direction, we will instead focus on the dynamical structures that support \emph{type III behaviors} in hybrid systems; we shall see that these will persist for $c\neq 0$. 

\section{Post-Inhibitory Facilitation}\label{sec:PIF}
Our characterization of \emph{post-inhibitory facilitation}  (PIF) follows Dodla et al. \cite{dodla}, who studied PIF for neurons of the medial superior olive and argued that it may be a rather general phenomenon. We characterize PIF as a situation that can arise with a quiescent neuron that fails to spike when it receives a certain excitatory input, but may spike when the same excitatory input is applied after a preliminary inhibitory input. If the excitation follows the inhibition with some delay that is neither too short nor too long, call it $t_e$, then it can cause the neuron to fire, even though the excitation was unable to induce firing on its own; see Figure \ref{fig:PIFintro}.  That is, the inhibition has a {\it facilitatory} role, because it establishes conditions that allow firing that would not have resulted had the inhibition not been applied.  PIF differs from more standard post-inhibitory rebound in that it involves not just the application and removal of inhibition but also requires the involvement of excitation, and has received much less attention in the literature.   

\begin{figure}[htbp]
	\centering
	 \includegraphics[width=5in]{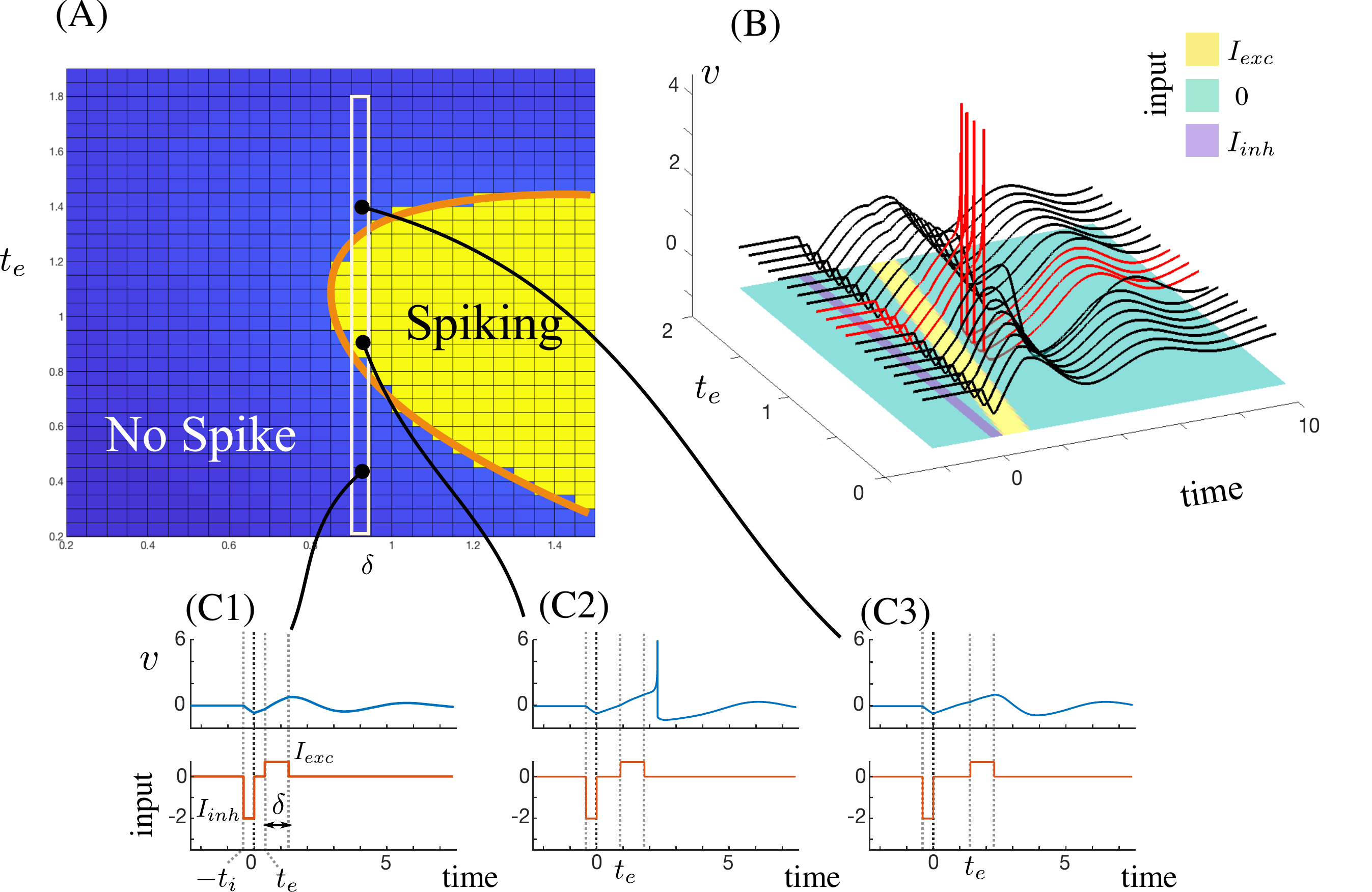}
	\caption{Numerical illustration of the PIF phenomenon. Here $F(v)=v^4-0.5v, b=2$, and $c=0$.  From the initial condition $(v,w)=(0,0)$, which is a critical point of (\ref{eq:model}) with $I=0$, an inhibitory input with $I_{inh}= - 2$ is applied for $t\in[-t_i,0]$ with $t_i=0.4$ time units.  Excitation $I_{exc}= 0.7$ is applied at time $t_e$ and kept on for time $\delta$.  (A):  Spiking occurs only for some choices of $t_e, \delta$, and never for $t_e=0$. (B):  Another view of the PIF phenomenon.  Here $\delta = 0.9$ is fixed.  The time when inhibition is on is labeled in purple, while the time when excitation is on is shown in yellow.  Spiking occurs for $t_e \in (0.8,1.3)$ (red voltage traces) but not for other choices of $t_e$ (black voltage traces). (C1),(C2),(C3): Example voltage traces for points in (A) with $\delta=0.9$ and $t_e \in \{ 0.4, 0.9, 1.4 \}$.}
	\label{fig:PIFintro}
\end{figure}

To study the mathematical structures that may support the PIF phenomenon, we will consider trajectories of system~\eqref{eq:model} with a time-varying input $I$ that is piecewise constant, composed of an inhibition phase ($I<0$) followed by a resting phase ($I=0$) and by an excitation phase ($I>0$). Assuming that the neuron does not spike in response to this stimulation, the resulting trajectories will be continuous but not smooth, formed as the concatenation of several segments:  an initial segment from $(v^*,w^*)$ defined with inhibition on, corresponding to $I<0$, which we call $\Phi_i(t)=(v_i(t),w_i(t))$; a second segment defined with $I=0$, which we call $\Phi_0(t) = (v_0(t),w_0(t))$; and a third segment defined with excitation on, corresponding to $I>0$, which we call $\Phi_e(t) = (v_e(t),w_e(t))$.  



We work under assumptions (A1)-(A3), that ensure simple structural conditions about the nullclines and the direction of the vector field.  (A2)-(A3) always hold for the exponential and the quadratic models, and also for the quartic model $F(v)=v^4+\lambda v$ for appropriate choices of parameters
  (see also Figure \ref{fig:setup}).
Let us now make some useful remarks derived from the current assumptions. From assumption (A1), we observe that the $w$-nullcline has positive (or infinite if $c=0$) slope, and in particular that for each fixed $w$ value, if we let $v(w)=cw/b$, then 
we have $w'>0$ for $v>v(w)$ and $w'<0$  for $v<v(w)$; in the case of infinite slope ($c=0$), the value of the voltage $v(w)$ is  actually independent of $w$. 
From the convexity of $F$ in assumption (A2), we notice that considering only the dynamics of the $v$-equation from (\ref{eq:model}) with $w>w_f$ fixed, the left branch of the $v$-nullcline  is attracting and the right branch is repelling. We shall denote the left nullcline branch by $\{ (V^-(w),w) \}$ and the right branch by $\{ (V^+(w),w) \}$ (see Figure~\ref{fig:setup}).
When the $w$-nullcline has finite positive slope, in addition to the  stable fixed point of Assumption (A3), there will be a single critical point on the right branch of the $v$-nullcline, which is a saddle, which we will denote $(v_r,w_r)$, with $w_r>w^*$. No such point exists when $c=0$. 

To characterize the emergence of PIF in the hybrid neuron model, our analysis relies on the concept of a {\it firing threshold curve}~\cite{touboul2008}. This concept is close to the notion of {\it spike threshold} used broadly in neuroscience and ties in with the concept of excitability.  The idea of the threshold is that for each neuron, there is a voltage level such that the neuron will generate an action potential if and only if its membrane potential exceeds that level.  Sometimes a similar idea is referenced as a  current threshold, such that a neuron at rest will spike if and only if an applied current exceeds that level. Computational modeling shows that a voltage threshold should not be considered as a fixed value for a given neuron, but rather depends on the levels of other quantities associated with that neuron, such as the activation and inactivation levels of its voltage-gated currents, and some mathematical work has considered how to  precisely identify this more complicated structure \cite{letson2018,mitry2013,plat2010,plat2011}. Given that we define spiking for system (\ref{eq:model}) based on finite time blow up of trajectories (in $v$, while $w$ remains bounded if $F(v)$ grows faster than quadratic in $v$), we define the firing threshold curve for system (\ref{eq:model}) as a curve that separates trajectories that blow up in finite time from those that remain bounded.

\begin{figure}[htbp]
	\centering
	\includegraphics[width=\textwidth]{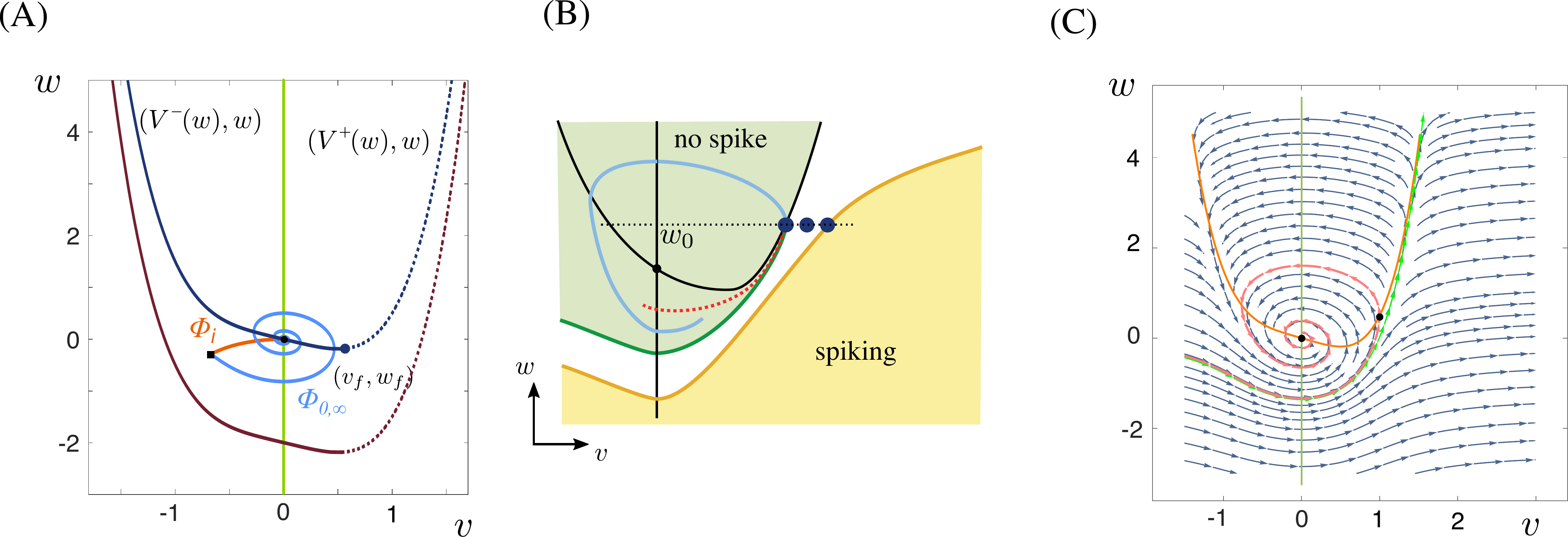}
	\caption{Phase plane for system (\ref{eq:model}) with $F(v)=v^4-0.5v, b=2, c=0$. (A) Response to a transient step of inhibitory current of amplitude $I_{inh}=-2$ applied in time interval $[0,t_i]$ with $t_i=0.4$. Diagram shows the $v$-nullcline for $I=0$ (dark blue, solid: stable part, dotted: unstable part, circle: fold point) and $I=-2.0$ (dark red), $w$-nullcline (green) and the continuous trajectory formed by concatenating $\Phi_i(t)$ (orange) that starts from $(v^*,w^*)$ at $t=0$ (black circle) and ends at $\Phi_0(0) \equiv (v_i(t_i),w_i(t_i))$ (black square, $t_i=0.4$), and $\Phi_{0,\infty}(t)$ (light blue) which starts from $\Phi_0(0)$ and tends to $(v^*,w^*)$ as $t \to \infty$. Later, we will refer to this concatenated curve as ${\cal C}$. (B) Existence of a threshold. Sketch of a trajectory through a point from $(V^+(w_0),w_0)$ for some $w_0>w_f$, with in black the nullclines, in blue the forward trajectory, in dotted red an impossible backward trajectory and in green a typical backward trajectory from this point (see text). The concatenation of that green trajectory with the branch $\{(V^+(w),w),w>w_0\}$ delineates a non-spiking region. In contrast, for $v_0$ large enough the neuron fires a spike, and below this orbit all trajectories spike (yellow region). The threshold lies between the green and yellow region.  (C) Stream plot of the vector field (blue vector) with a specific trajectory (green, through $(v,w)=(0, -1.363177138)$) approximately matching the asymptotic river~\cite{letson2018} and splitting the phase space intro trajectories going to $(v^*,w^*)$ and those associated with a spike. Pink trajectory through a point on $(V^+(w),w)$ (black circle), converges forward to $(v^*,w^*)$, and diverges backwards. } 
	\label{fig:setup}
\end{figure}

The details of the firing threshold curve for system (\ref{eq:model}) depend on whether $c$ is nonzero and on the form of $F(v)$ \cite{touboul2009}.  
When $c>0$, the fixed point $(v_r,w_r)$ is a saddle and the one-dimensional stable manifold of $(v_r,w_r)$ (or a part of it) may form a separatrix that acts as a firing threshold curve (see Fig. 5b and 5c in \cite{touboul2009}), provided that $(v^*,w^*)$ is not an unstable focus, which is excluded by assuming that $(v^*,w^*)$ is stable. This stable manifold acts as the firing threshold unless there is an unstable periodic orbit surrounding the stable fixed point $(v^*,w^*)$, in which case that orbit becomes the firing threshold curve, and one branch of the stable manifold wraps around this periodic orbit (cf. \cite[Fig. 5a]{touboul2009} or \cite{rubinDCDSII}). This case remains as a possibility when $c=0$. However, when $c=0$ and there is no unstable periodic orbit, since the saddle point $(v_r,w_r)$ no longer exists, a new approach is needed to define a firing threshold curve that splits the phase plane into two regions, one where trajectories converge (forward in time) to $(v^*,w^*)$ and the other where the solutions blow up. To characterize the existence of this threshold, we will show that for any $w_0>w_f$ (the minimum of $w$ along the $v$-nullcline), there exists a voltage $v_{th}(w_0)$ and an $\eps>0$ such that for any $v_0>v_{th}(w_0)$, the voltage blows up and for $v_0 \in [v_{th}(w_0)-\eps,v_{th}(w_0))$ the voltage does not blow up. The full orbit through $(v_{th}(w_0),w_0)$ forms the threshold. Fix $w_0>w_f$. It is easy to show that for $v_0$ large enough, say $v_0=\bar{v}_0(w_0)$, the solution will blow up (see, e.g., trapping regions for spiking outlined in~\cite{touboul2008,touboul:09}), and the part of the phase space below such trajectories is all associated with spiking (Figure~\ref{fig:setup}B, yellow region). Showing the existence of a threshold thus amounts to finding a value of the voltage for which the solution does not blow up, as illustrated in Figure~\ref{fig:setup}B. In fact, it suffices to notice that the trajectory of~\eqref{eq:model} with initial condition $(V^+(w_0),w_0)$ converges to the stable critical point $(v^*,w^*)$ as $t \to \infty$. Figure~\ref{fig:setup}B illlustrates why this convergence must hold. At the initial condition, the vector field points vertically up into the green region, the forward trajectory (blue) goes up and to the left, loops around the fixed point, crosses downwards through $(V^-(w),w)$ and then crosses the $w$-nullcline. The backward trajectory from $(V^+(w_0),w_0)$ will go down and left and cross the $w$-nullcline below the fixed point. This crossing cannot occur above forward trajectory crossing (Fig.~\ref{fig:setup}B, dashed red trajectory), since that would yield a bounded trapping region for the backward trajectory but there is no critical point or periodic orbit to which it could converge as $t \to -\infty$. The backward trajectory therefore crosses the $w$-nullcline below the forward trajectory (Fig.~\ref{fig:setup}B, green trajectory) and diverges as $t\to -\infty$, while trapping the forward orbit, forcing it to converge to $(v^*,w^*)$. Therefore, the region of the phase space above the concatenation of $\{(V^+(w),w), w>w_0\}$ and the backward orbit associated with initial condition $(V^+(w_0),w_0)$ (light green in Figure~\ref{fig:setup}B) is composed of trajectories converging to $(v^*,w^*)$.   Noting that $v'$ is increasing in $v$ to the right of $\{ (V^+(w),w)\}$, we obtain a unique threshold value $v_{th} = \inf \{ v_0 >V^+(w_0): \;  \mbox{the solution } \; (v(t),w(t)) \; \mbox{with initial condition}  \; (v_0,w_0)  \;\mbox{satisfies} \; v(t) \to \infty \; \mbox{as} \; t \to \infty \}$ with $v_{th} \in (V^+(w),\bar{v}_0(w))]$, and the trajectory through $(v_{th},w_0)$ serves as the firing threshold curve.  We can define a similar curve for each fixed $I$, and we denote each such curve by ${\cal F}(I)$. 
Note that because $w$ is monotone increasing to the right of the $w$-nullcline, each ${\cal F}(I)$ can be represented as the graph of an increasing function of $v$ there.

To approximate this firing threshold numerically, we use the recently defined concept of an {\em asymptotic river} in planar systems~\cite{letson2018} (green orbit in Figure~\ref{fig:setup}C).  To define an asymptotic river, we can examine the locus of zero curvature and the curve of zero torsion (i.e., zero derivative of curvature) for the flow of system (\ref{eq:model}) and determine where these curves converge together as they approach infinity.  By setting a condition on the closeness of these curves within a finite region of phase space, we can find a point that is arbitrarily close to lying on an asymptotic river, and we can use the trajectory through that point as an approximation to an asymptotic river. Using this approach, we find that an approximate asymptotic river for system (\ref{eq:model}) with fixed $I$, which apparently converges to the right branch of its $v$-nullcline as $t \to \infty$ (Fig.~\ref{fig:setup}C), provides an excellent approximation to its firing threshold (see also \cite{desroches2013}), and thus we use such a trajectory in the  numerical illustrations with $c=0$ that follow. 

For fixed $I$ for which system (\ref{eq:model}) has a stable equilibrium point, say $(v^*(I),w^*(I))$ with $(v^*(0),w^*(0))=(v^*,w^*)$ from (A3), denote the basin of attraction of $(v^*(I),w^*(I))$ by ${\cal A}(I)$ and the set of initial conditions that blows up in finite time by ${\cal B}(I)$.  The curve ${\cal F}(I)$ forms part of the boundary between these sets. That is, any neighborhood of a point on  ${\cal F}(I)$ will intersect both ${\cal A}(I)$ and ${\cal B}(I)$.  We will use these sets in our analysis.


Next, we will introduce additional notation (see Figure \ref{fig:setup}A).    Fix any level of inhibitory input $I_{inh}<0$.  For $t_i>0$ sufficiently small, the solution of system (\ref{eq:model}) with $I=I_{inh}$ and with initial condition $(v^*,w^*)$, which we denote by $\Phi_i(t) = (v_i(t),w_i(t))$ with $\Phi_i(0)=(v^*,w^*)$,  satisfies $v_i'(t)<0$ for all $t \in [0,t_i]$.   
For PIF, 
we are interested in the trajectory formed by concatenating $\Phi_i(t)$ together with segments $\Phi_0(t)$ and $\Phi_e(t)$, each defined on a finite time interval, as mentioned earlier; to actually generate a spike, we will also need a final segment -- from the flow with $I=0$ -- concatenated after $\Phi_e(t)$, so we will include that segment in our definitions but it will not be critical in our analysis.
We fix the initial trajectory segment $\Phi_i(t)$   on the time interval $[0,t_i]$  and consider a trajectory defined forward in time from the termination point of $\Phi_i(t)$, namely $(v_i(t_i),w_i(t_i))$.
For convenience, we introduce a shifted time variable, so that the trajectory we define starts from $t=0$ at $(v_i(t_i),w_i(t_i))$.
To generate this trajectory, we use system (\ref{eq:model}) with piecewise constant $I$ that depends on several positive parameters, which we leave unspecified for now.
Specifically, for any $I_{exc}, \delta, t_e > 0$, define
\begin{equation}
\label{eq:pieceI}
I(t) = \left\{ \begin{array}{l} 0, \; t \in [0,t_e), \vspace{0.1in} \\ I_{exc}, \; t \in [t_e,t_e+\delta), \vspace{0.1in} \\  0, \;  t \geq t_e+\delta. \end{array} \right. 
\end{equation}
Let $\Phi(t;I_{exc},t_e,\delta)$ denote the solution of (\ref{eq:model}) with initial condition $(v_i(t_i),w_i(t_i))$ and with $I(t)$ given by (\ref{eq:pieceI}).  We can now give a more mathematically precise definition of PIF; see also Figure \ref{fig:PIF}.  

\medskip

\noindent {\bf Definition.}  Fix the inhibition strength $I_{inh}<0$ and choose the inhibition duration $t_i$ to be a time (dependent on $I_{inh}$) such that for all $t \in (0,t_i)$, we have both $v_i'(t)<0$ and $\Phi_i(t) \in {\cal A}(0)$. PIF occurs for system \eqref{eq:model} if assumptions (A1)-(A3) hold and there exists an interval of positive real numbers ${\cal I}  = (\underline{I}_{exc},\overline{I}_{exc})$, 
such that:
\begin{enumerate}
\item if $I_{exc} \in {\cal I}$, then there exist:
\begin{description}
\item{(a)} an interval of positive, finite real numbers ${\cal T}(I_{exc}) = (\underline{t}_e(I_{exc}),\overline{t}_e(I_{exc}))$ and,  
\item{(b)} for each  $t_e \in {\cal T}(I_{exc})$, a corresponding constant $\underline{\delta}(I_{exc},t_e)  >  0$ 
\end{description}
such that  $\Phi(t;I_{exc},t_e,\delta)$ yields a spike if and only if $t_e \in {\cal T}(I_{exc})$ and $\delta > \underline{\delta}(I_{exc},t_e)$,  
\item if $I_{exc} < \underline{I}_{exc}$, then for all positive $t_e$ and $\delta$, $\Phi(t;I_{exc},t_e,\delta)$ does not yield a spike, and
\item if $I_{exc} > \overline{I}_{exc}$, then there exists $\delta > 0$ such that $\Phi(t;I_{exc},0,\delta)$ yields a spike.
\end{enumerate}

\medskip

\begin{figure}[htbp]
	\includegraphics[width=\textwidth]{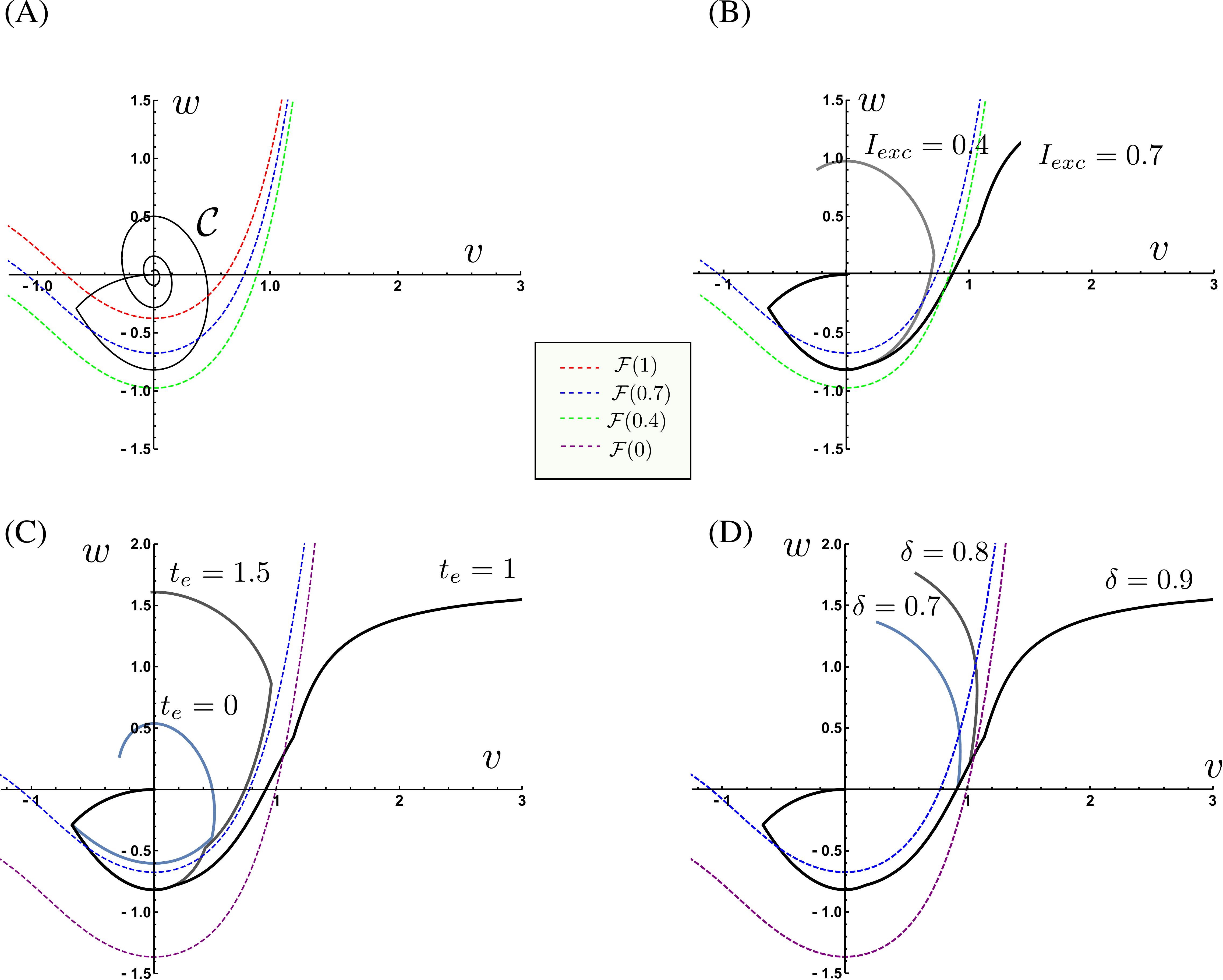}  
 %
\begin{caption}{
Illustration of the conditions for PIF with $F(v)=v^4-0.5v$, $b=2$, and $c=0$.  In all cases, the critical point in the absence of input lies at the origin. Colored dashed curves indicate ${\cal F}(I)$ for $I=0$ (purple), $I=0.4$ (green), $I=0.7$ (blue), and $I=1.0$ (red). The inhibition applied in all cases is of amplitude $I_{inh}=-2$ and duration $t_i=0.4$, followed by an excitation of different amplitudes $I_{exc}$ (B), delay $t_e$(C) or duration $\delta$. Unless otherwise indicated, $I_{exc}=0.7$ $t_e=1$, $\delta=0.9$.
(A):  The concatenated trajectory ${\cal C}$ (black) in response to the inhibition only. Note that ${\cal C}$ lies above (i.e., on the non-spiking side of) ${\cal F}(0.4)$, and therefore any excitation with $I_{exc}=0.4$ will not produce a spike. Part of ${\cal C}$ lies below ${\cal F}(0.7)$, but the leftmost point, $(v_i(t_i),w_i(t_i))$, does not. Since $(v_i(t_i),w_i(t_i))$ lies below ${\cal F}(1.0)$, the definition of PIF implies that PIF cannot occur with $I_{exc}=1.0$. $I_{exc}=0.7$ is thus a reasonable excitation value for PIF. 
(B):  Role of $I_{exc}$. When excitation of amplitude $I_{exc}=0.4$ (grey) or $I_{exc}=0.7$ (black) is applied, ${\cal C}$ is below ${\cal F}(I=0.7)$ but above ${\cal F}(I=0.4)$.  Hence, only the stronger excitation can result in a spike.  
(C):  Role of $t_e$. For small ($t_e=0$, left blue-grey trajectory) or large ($t_e=1.5$, grey trajectory) delay, the trajectory remains above 
${\cal F}(0)$ (dashed purple) and therefore return to rest, while for intermediate delay ($t_e=1$, black trajectory), the trajectory crosses ${\cal F}(0)$ (purple) and spikes.  
(D): Role of $\delta$: Trajectories associated with excitation durations $\delta=0.7$ (left, blue-grey), $0.8$ (center, grey), or $0.9$ (right, black). Only a long enough excitation allows the trajectory to cross ${\cal F}(0)$ (purple) and results in a spike.
 } \label{fig:PIF}
\end{caption}
\end{figure}

In other words, PIF describes the situation in which the application of excitatory inputs within a certain bounded range of magnitudes can induce a spike if and only if the excitation is introduced within an appropriate, bounded (since $\overline{t}_e(I_{exc})$ is finite) time window after the offset of inhibition.  On the other hand, if excitation is too weak, if it comes on or turns off too early, or if it comes on too late, then a spike is not fired.   There is a subtle point in this definition:  because $\Phi_i(t_i) \equiv  (v_i(t_i),w_i(t_i)) \in \mathcal{A}(0)$, it follows that in the absence of excitation, the solution to (\ref{eq:model}) with $I=0$ from $(v_i(t_i),w_i(t_i))$ will converge back to $(v^*,w^*)$ as $t \to \infty$ (Fig. \ref{fig:setup}).  Thus, the inclusion of the condition that $\overline{t}_e$ is finite for each $I_{exc} \in {\cal I}$ implies that the solution to (\ref{eq:model}) with $I=I_{exc}$ and initial condition $(v^*,w^*)$ will not produce a spike.    
Therefore, the application of inhibition is crucial for allowing the spike to occur, as desired.
We emphasize that the condition $(v_i(t_i),w_i(t_i)) \in \mathcal{A}(0)$ relates the position of $\Phi_i(t)$, which is defined from system (\ref{eq:model}) with $I=I_{inh}$, relative to the curve ${\cal F}(0)$ and the set ${\cal A}(0)$, which are defined from (\ref{eq:model}) with $I=0$. 
Another subtle point is that all of the quantities appearing in the definition in general will depend on the strength $I_{inh}$ and the duration $t_i$ of the inhibition that is applied before the excitation.

PIF may seem like a specialized property, and indeed it would be if $I_{exc}$, $t_e$, and $\delta$ were pre-specified.  But in fact, since the definition of  PIF allows flexibility in choosing $I_{exc}, t_e$, and $\delta$, we do not need elaborate conditions to ensure that PIF occurs.  Indeed, we have the following main result on PIF:


\medskip

\begin{theorem} \label{thm:PIF}
 Suppose that assumptions (A1)-(A3) hold for system (\ref{eq:model}).  Fix $I_{inh}, t_i$ as in the defintion of PIF.  Define  $\Phi_{0,\infty}(t) = (v_0(t),w_0(t))$ as the solution to system (\ref{eq:model}) with $I=0$ on the time interval $t \in [0,\infty)$ with initial condition $\Phi_{0,\infty}(0)=(v_i(t_i),w_i(t_i))$.  If $\Phi_{0,\infty}(t)$ remains bounded for all $t>0$ and lies in the basin of attraction of $(v^*,w^*)$, then PIF occurs for system (\ref{eq:model}) for this $I_{inh}, t_i$ (and hence for open intervals containing $I_{inh}, t_i$). 
 \end{theorem}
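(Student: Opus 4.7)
The plan is to reduce the spiking condition to a geometric statement about the position of the post-inhibition trajectory $\Phi_{0,\infty}$ relative to the family of firing threshold curves $\{\mathcal{F}(I)\}_{I\ge 0}$, and then to identify two critical values $\underline{I}_{exc}<\overline{I}_{exc}$ of the excitation amplitude at which spiking transitions occur. Since for any $I_{exc}>0$ the vector field differs from the $I=0$ field only by the horizontal vector $(I_{exc},0)$, and since augmenting $I$ only increases $v'$ (giving by comparison the monotonicity $\mathcal{B}(I_1)\subseteq\mathcal{B}(I_2)$ for $I_1\le I_2$), one obtains the clean reduction: for fixed $I_{exc}$ and $t_e$, there exists $\delta>0$ such that $\Phi(\cdot;I_{exc},t_e,\delta)$ yields a spike if and only if $\Phi_{0,\infty}(t_e)\in\mathcal{B}(I_{exc})$, in which case the threshold duration $\underline{\delta}(I_{exc},t_e)$ equals the time for the $I_{exc}$-orbit from $\Phi_{0,\infty}(t_e)$ to enter $\mathcal{B}(0)$ permanently.

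By hypothesis, the closure of $\Phi_{0,\infty}([0,\infty))$---a compact curve with limit $(v^*,w^*)$---is contained in the open set $\mathcal{A}(0)$. I therefore define
\[
\underline{I}_{exc}=\inf\{I>0:\overline{\Phi_{0,\infty}([0,\infty))}\cap\mathcal{B}(I)\ne\emptyset\},\qquad \overline{I}_{exc}=\inf\{I>0:(v_i(t_i),w_i(t_i))\in\mathcal{B}(I)\}.
\]
Compactness of the relevant set in the open $\mathcal{A}(0)$, combined with continuous dependence of $\mathcal{F}(I)$ on $I$, gives $\underline{I}_{exc}>0$; both quantities are finite because sufficiently large $I$ forces bounded portions of phase space into $\mathcal{B}(I)$.

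The heart of the proof, and the main technical obstacle, is to show the strict inequality $\underline{I}_{exc}<\overline{I}_{exc}$. At $I=\overline{I}_{exc}$ the threshold curve $\mathcal{F}(\overline{I}_{exc})$ passes through $p=(v_i(t_i),w_i(t_i))$. Since $\mathcal{F}(\overline{I}_{exc})$ is an $\overline{I}_{exc}$-trajectory while $\Phi_{0,\infty}$ is a $0$-trajectory, their tangent vectors at $p$ differ by the nonzero horizontal vector $(\overline{I}_{exc},0)$. Under (A3) the stable fixed point of the $\overline{I}_{exc}$-system exists, and by the choice of $I_{inh},t_i$ it lies strictly to the upper right of $p$ (since inhibition drives both $v$ and $w$ below their resting values while increasing $I$ moves the stable fixed point up the left branch of the $v$-nullcline); hence the $\mathcal{A}(\overline{I}_{exc})$ side of $\mathcal{F}$ locally at $p$ is the side containing that quadrant. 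Projecting the horizontal offset $(-\overline{I}_{exc},0)$ between the two tangents onto the inward-$\mathcal{A}(\overline{I}_{exc})$ normal at $p$ then yields a strictly negative value, so $\Phi_{0,\infty}$ leaves $p$ transversally into $\mathcal{B}(\overline{I}_{exc})$. Hence $\Phi_{0,\infty}(t_e)\in\mathcal{B}(\overline{I}_{exc})$ for small $t_e>0$, and continuity of $\mathcal{F}(I)$ in $I$ propagates this intersection to some $I<\overline{I}_{exc}$, yielding $\underline{I}_{exc}<\overline{I}_{exc}$.

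The three PIF conditions then follow. Condition~(2) is immediate from the definition of $\underline{I}_{exc}$ together with the reduction. Condition~(3) follows because for $I_{exc}>\overline{I}_{exc}$ one has $(v_i(t_i),w_i(t_i))\in\mathcal{B}(I_{exc})$, whence the $I_{exc}$-orbit blows up and any $\delta$ past the permanent entry time into $\mathcal{B}(0)$ gives a spike at $t_e=0$. For Condition~(1), fix $I_{exc}\in(\underline{I}_{exc},\overline{I}_{exc})$ and consider $S(I_{exc})=\{t_e>0:\Phi_{0,\infty}(t_e)\in\mathcal{B}(I_{exc})\}$: it is open (by openness of $\mathcal{B}(I_{exc})$ and continuity of $\Phi_{0,\infty}$), nonempty (by the previous step), bounded away from $0$ (since $I_{exc}<\overline{I}_{exc}$), and bounded above (because $\Phi_{0,\infty}(t_e)\to(v^*,w^*)\in\mathcal{A}(I_{exc})$ by continuity of the stable fixed point on the relevant $I$-range). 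A connected component supplies $\mathcal{T}(I_{exc})$, while $\underline{\delta}(I_{exc},t_e)$ is provided by the reduction. Finally, robustness under small perturbations of $(I_{inh},t_i)$ is automatic since every inequality used above is open and $\Phi_i,\Phi_{0,\infty}$ depend continuously on these parameters.
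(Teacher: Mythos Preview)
Your overall architecture matches the paper's: both reduce the question to the position of the post-inhibitory curve relative to the ordered family $\{\mathcal F(I)\}$, define the same two critical values $\underline I_{exc}<\overline I_{exc}$, and verify the three clauses of the PIF definition from there. Your transversality computation at $p=(v_i(t_i),w_i(t_i))$ is a legitimate alternative to the paper's contradiction argument for showing $\underline I_{exc}<\overline I_{exc}$: since $w'=bv_p-cw_p<0$ at $p$ while the two tangents differ only by the horizontal vector $(\overline I_{exc},0)$, the cross product of the two tangents is $\overline I_{exc}(bv_p-cw_p)<0$, which indeed puts $\Phi_{0,\infty}$ on the $\mathcal B(\overline I_{exc})$ side for small $t_e>0$, and this persists for slightly smaller $I$.

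There is, however, a genuine gap in your verification of Condition~(1): you need the interval $\mathcal T(I_{exc})$ to be \emph{bounded}, which requires $(v^*,w^*)\in\mathcal A(I_{exc})$ for every $I_{exc}\in(\underline I_{exc},\overline I_{exc})$. Your justification ``by continuity of the stable fixed point on the relevant $I$-range'' does not work. First, assumption (A3) only guarantees a stable fixed point at $I=0$; for $c>0$ the fixed point can disappear via saddle--node well within this $I$-range, so you cannot invoke it. Second, even if the $I_{exc}$-fixed point exists and depends continuously on $I_{exc}$, that says nothing about whether the \emph{different} point $(v^*,w^*)=(v^*(0),w^*(0))$ lies in the basin $\mathcal A(I_{exc})$. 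The paper closes this gap with a separate geometric argument (its condition (C3)): since $(v^*,w^*)$ sits on the $w$-nullcline, any $\mathcal F(I)$ passing through it has $w^*$ as its minimal $w$-value; but then $p$, which satisfies $w_i(t_i)<w^*$, already lies in $\mathcal B(I)$, so $I\ge\overline I_{exc}$. By continuity in $I$ this forces $(v^*,w^*)\in\mathcal A(I_{exc})$ for all $I_{exc}<\overline I_{exc}$. You should add this step.

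A smaller point: your identification of the $\mathcal A$-side of $\mathcal F(\overline I_{exc})$ near $p$ via ``the side containing the stable fixed point'' has the same issue (existence not guaranteed). The clean way to orient $\mathcal F(I)$ is to trace the $\mathcal A$-side along the trajectory from the region to the right of $\{(V^+(w),w)\}$, where ``higher $w$ at fixed $v$'' is $\mathcal A$ by construction; carried back to $p$ this gives the inward-$\mathcal A$ normal with positive $v$-component, which is exactly what your projection computation needs.
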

\medskip

\noindent
{\bf Remark:}  Theorem 1 shows that PIF is a quite general phenomenon, but it does not address robustness.  We can think of the size of ${\cal I}, {\cal T}$ as indicators of how robust PIF is for a given system and parameter set.  The  robustness of PIF then depends on a variety of factors.  Two factors are the size of the basin of attraction of $(v^*,w^*)$ with $I=0$, ${\cal A}(0)$, and the distance of $(v^*,w^*)$ from the boundary $\partial {\cal A}(0)$.  If a large inhibition can be applied without causing $\Phi_i$ to cross $\partial {\cal A}(0)$, then that allows $\Phi_{0,\infty}$ to deviate relatively far from $(v^*,w^*)$ as it converges to  $(v^*,w^*)$, which provides an opportunity for subsequent excitation to induce a spike.  Two additional, not entirely independent factors are the size of $b, c$ in system (\ref{eq:model}) and the strength with which $(v^*,w^*)$ attracts trajectories, which depends on the eigenvalues of the Jacobian of system (\ref{eq:model}) at $(v^*,w^*)$.  With weaker attraction to $(v^*,w^*)$, trajectories can undergo larger excursions in their approach to $(v^*,w^*)$, which also favors PIF.

\medskip

\begin{proof}
We start by fixing $I_{inh}, t_i$ as needed\footnote{In fact, for any $I_{inh}<0$, there exists a suitable choice of $t_i$ as required in the definition of PIF.}. Consider the curve ${\cal C}$ in the $(v,w)$ plane constructed as the concatenation of $\Phi_i(t)$ together with the trajectory $\Phi_{0,\infty}(t)$ (see Figure \ref{fig:setup}A or Figure \ref{fig:PIF}A).  ${\cal C}$ is a continuous, closed and bounded, and includes $(v^*,w^*)$.  
Regardless of whether $(v^*,w^*)$ is a node or a focus, ${\cal C}$ will achieve its minimum value in $w$ at the point where $\Phi_{0,\infty}(t)$ first intersects the $w$-nullcline, which we denote by $\Phi_{0,\infty}(t^-) = (v^-,w^-) \equiv (cw^-/b,w^-)$,  and it will achieve its maximum value in $v$ at the point where $\Phi_{0,\infty}(t)$ first intersects the $v$-nullcline, which we denote by $\Phi_{0,\infty}(t^+) = (v^+,w^+)$.

Recall that the firing threshold curve ${\cal F}(I)$ is defined for each $I$.
For any fixed $I>0$, we can compare the location of ${\cal C}$ to the firing threshold curve ${\cal F}(I)$ for that $I$.  When we consider the flow of (\ref{eq:model}) with that $I$ value, we refer to  a point as lying {\em above} ${\cal F}(I)$ (or belonging to ${\cal A}(I)$) if the trajectory emanating from that point does not blow up in finite time and as lying {\em below} ${\cal F}(I)$ (or belonging to ${\cal B}(I)$) if the trajectory emanating from that point does blow up in finite time.  
Necessary and sufficient conditions for a given choice of $I_{exc}>0$ to belong to an interval ${\cal I}$ for which PIF occurs are (see Figures \ref{fig:PIF} and \ref{fig:intersect}):
\begin{description}
\item{(C1)} $(v_i(t_i),w_i(t_i))$ lies above ${\cal F}(I_{exc})$.  This condition implies that there is a window of time after the offset of inhibition when application of excitation will not cause firing.
\item{(C2)} There exists a point on ${\cal C}$ that lies below ${\cal F}(I_{exc})$. This condition implies that there is a positive time such that if excitation is turned on at that time, then it can cause firing if it is left on long enough.
\item{(C3)} $(v^*,w^*)$ lies above ${\cal F}(I_{exc})$.  This condition implies that if excitation is applied too late, then firing will not result.  
\end{description}


Our goal is to show that there exists a bounded interval of positive $I_{exc}$ values for which conditions (C1)-(C3) hold.
Since $\Phi_{0,\infty}(t) \to (v^*,w^*)$ as $t \to \infty$, ${\cal C}$ lies above ${\cal F}(0)$.
By comparison of vector fields, if the trajectory from an initial condition $(v_0,w_0)$ blows up in finite time with $I=I_1>0$, then the trajectory from $(v_0,w_0)$ also blows up in finite time with $I=I_2>I_1$.
Thus, all points below ${\cal F}(I_1)$ also lie below ${\cal F}(I_2)$, and in general, there is an ordering of the threshold curves ${\cal F}(I)$, and the spiking regions ${\cal B}(I)$ form an increasing sequence in the sense that $I_1 < I_2$ implies ${\cal B}(I_1) \subset {\cal B}(I_2)$ 
(Figure \ref{fig:intersect}). 
For each $I$, the corresponding threshold curve ${\cal F}(I)$ lies to the right of the corresponding $v$-nullcline as $t \to \infty$.   Recalling that ${\cal F}(I)$ is a trajectory and following it backwards in time, it will progress in the direction of decreasing $w$ until it eventually crosses the $w$-nullcline and proceeds in the direction of increasing $w$.  After this crossing, it may or may not intersect the $v$-nullcline, depending on $F(v)$. 
If not, then we can represent ${\cal F}(I)$ by $\{ v,W(v,I) \}$, with $\partial W(v,I) / \partial I > 0$. 
If at least one intersection with the $v$-nullcline does occur, then 
we can represent the part of ${\cal F}(I)$ defined for times greater than the largest-time such intersection  (i.e., the first one that occurs as we follow ${\cal F}(I)$ backwards in time) by $\{ v,W(v,I) \}$, still with $\partial W(v,I) / \partial I > 0$. 

\begin{figure}[htbp]
	\begin{center} 
		\hspace{1in} \includegraphics[width=4in]{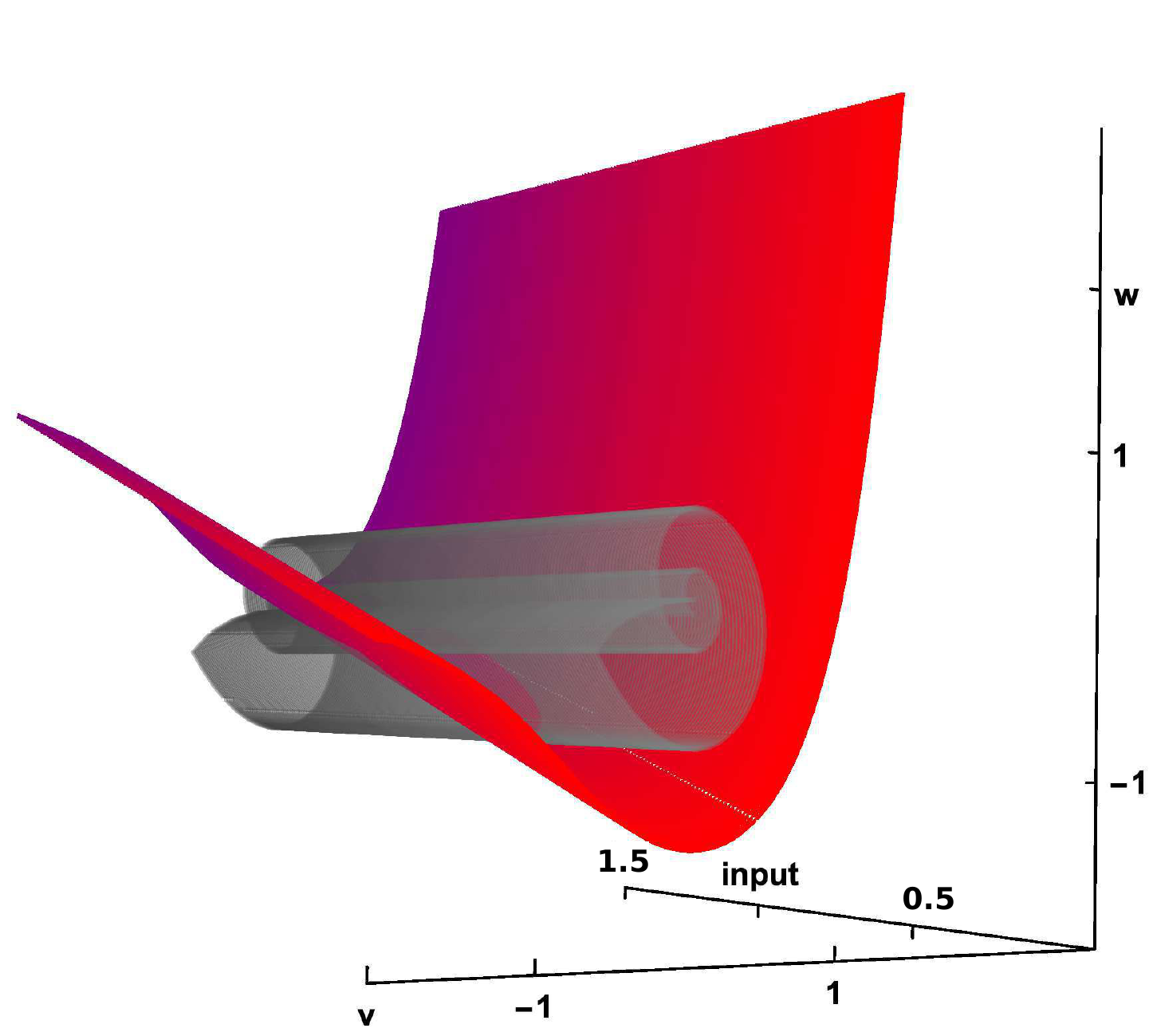} 
	\end{center}
	
	\caption{For any fixed range $J$ of $I$ values, we can form a topological cylinder in $(v,w,I)$-space as ${\cal C} \times J$.  
Abusing notation, define the two-dimensional firing surface ${\cal F} = \{ (v,w,I) : w=W(v,I) \} = \cup_{I} {\cal F}(I)$ in three-dimensional $(v,w,I)$ space (with $\dot{I}=0$). 
We can visualize this object (red/purple) together with ${\cal C} \times J$ (grey) in a 3-d view.  In this image, made using the same parameters as Figure \ref{fig:PIF}, $J = [0,1.5]$, labeled as ``input'' in the figure.  For sufficiently small input $I$, ${\cal C}$ lies above ${\cal F}(I)$. For some $\underline{I}_{exc}>0$,  ${\cal C} \times \underline{I}_{exc}$ first achieves a point of tangency with ${\cal F}$.  From there, continuing to increase $I$, we obtain a candidate for $\overline{I}_{exc}$ at the minimal $I$ value where $(v_i(t_i),w_i(t_i))$ intersects ${\cal F}$ (visible toward the left side of the plot).}
	\label{fig:intersect}
\end{figure}

Now we show that for $I$ sufficiently large, the point $(v^-,w^-) \in {\cal C}$ lies below ${\cal F}(I)$.
Because $W(v,I)$ is increasing in $I$, it suffices to show that for $I$ sufficiently large, the trajectory of (\ref{eq:model}) with initial condition $(cw^-/b,w^-)$, call it $\Phi((cw^-/b,w^-),t;I)$, crosses ${\cal F}(0)$, in which case it will certainly lie below ${\cal F}(I)$.
Pick a point $(v_t,w_t) \in {\cal F}(0)$ with $v_t > cw^-/b$ and $w_t > w^-$.   
We will attain the desired result by showing that $\Phi((cw^-/b,w^-),t;I)$ reaches $\{ v = v_t \}$ before it reaches $\{ w = w_t \}$. 
Define $k:= \min \{ F(v)-w : v \in [cw^-/b,v_t], w \in [w^-,w_t] \}$.
An upper bound $\overline{t}$ on the time for $\Phi((cw^-/b,w^-),t;I)$ to reach $\{ v = v_t \}$ is given by solving $v' = k + I, v(0)=cw^-/b, v(\overline{t}) = v_t$ to obtain
\[
\overline{t} = \frac{\textstyle v_t - cw^-/b}{\textstyle k+I}.
\]
A lower bound $\underline{t}$ on the time for $\Phi((cw^-/b,w^-),t;I)$ to reach $\{ w = w_t \}$ is given by solving $w' = bv_t - cw^-$ with conditions $w(0)=w^-, w(\underline{t}) = w_t$ to obtain
\[
\underline{t} = \frac{\textstyle w_t - w^-}{\textstyle bv_t - cw^-}.
\]
For $I$ sufficiently large, $\overline{t} < \underline{t}$, and hence $\Phi((cw^-/b,w^-),t;I)$ reaches $\{ v = v_t \}$ before it reaches 
$\{ w = w_t \}$, which implies that $\Phi((cw^-/b,w^-),t;I)$ lies to the right of ${\cal F}(0)$, and hence to the right of ${\cal F}(I)$, when it reaches $\{ w = w_t \}$.

As a consequence, (C2) holds for $I_{exc}$ sufficiently large.  Correspondingly, as $I$ is raised from 0, there will be a unique positive $I$ value, which we can take as $\underline{I}_{exc}$, where ${\cal C}$ first achieves one or more points of tangency with ${\cal F}(I)$ (Figure \ref{fig:intersect}). To see that (C1)-(C3) hold, it remains to show that the set of tangent points with $I =  \underline{I}_{exc}$ excludes $(v_i(t_i),w_i(t_i))$ and $(v^*,w^*)$.

%
By way of contradiction, suppose that the initial set of points of tangency includes $(v_i(t_i),w_i(t_i))$.  Then there exists $I>\underline{I}_{exc}$  such that $(v_i(t_i),w_i(t_i)) \in {\cal B}(I)$ and there exists $t>0$ such that for that specific $t$ value, $\Phi_{0,\infty}(t):=(v_{0,\infty}(t),w_{0,\infty}(t)) \in {\cal A}(I)$ with $w_{0,\infty}(t) < w_i(t_i)$.
The trajectory from $(v_i(t_i),w_i(t_i))$ for this $I$ value must lie above $\Phi_{0,\infty}(t)$, since the latter is defined from system (\ref{eq:model}) with $I=0$, whereas $I>\underline{I}_{exc}>0$.
But that means that this trajectory must cross ${\cal F}(I)$ and enter ${\cal A}(I)$, such that it remains bounded as $t \to \infty$, which contradicts the assumption that $(v_i(t_i),w_i(t_i)) \in {\cal B}(I)$.  Thus, the first tangency occurs away from $(v_i(t_i),w_i(t_i))$. 
Similarly, the first tangency also cannot occur at $(v^*,w^*)$.
Indeed, $(v^*,w^*)$ lies on the $w$-nullcline.  Hence, if ${\cal F}(I)$ passes through $(v^*,w^*)$, then all other points that lie on ${\cal F}(I)$ below the $v$-nullcline for that $I$ have $w>w^*$, whereas the entire segment $\Phi_i \subset {\cal C}$ consists of points with $w< w^*$, such that ${\cal F}(I)$ must have already crossed through this segment to achieve a tangency with $(v^*,w^*)$.  Thus, there exists an interval of $I$ values of the form $(\underline{I}_{exc},\overline{I}_{exc})$ for which (C1)-(C3) hold, although at this point we have not yet determined whether $\overline{I}_{exc}$ is finite.

In fact, as we continue to increase $I$ from $\underline{I}_{exc}$, we obtain a finite value of $\overline{I}_{exc}$ at the minimal $I$ value where $(v_i(t_i),w_i(t_i))$ intersects ${\cal F}(I)$, such that (C1) fails; see Figure \ref{fig:intersect}.   The fact that (C1) fails before (C3) follows from the argument in the preceding paragraph, since $(v_i(t_i),w_i(t_i)) \in \Phi_i$ and hence $w_i(t_i) < w^*$.


The remaining issue to check is whether for any $I_{exc}$ in our candidate range $(\underline{I}_{exc},\overline{I}_{exc})$  we can establish that there are an actual timing and duration of excitation application that will allow spiking to occur.  
Consider a value $I \in (\underline{I}_{exc},\overline{I}_{exc})$, such that ${\cal C}$ intersects ${\cal F}(I)$ in two points, call them $p_1, p_2$, neither of which is $(v_i(t_i),w_i(t_i))$ or $(v^*,w^*)$.  Note that since (C1) holds for this $I_{exc}$, both $p_1, p_2$ belong to $\Phi_{0,\infty}$, not to $\Phi_i$.  
Let $t_1, t_2$ with $0 < t_1 < t_2$ denote the times it takes for the trajectory of (\ref{eq:model}) with $I=0$ and initial condition $(v_i(t_i),w_i(t_i))$ to reach $p_1, p_2$, respectively.  
If we take $I_{exc}=I$ and $t_e \in (t_1,t_2)$, then there exists a $\delta>0$ sufficiently large such that PIF occurs.  That is, the points $p_1, p_2$ lie on ${\cal F}(I)$ and hence trajectories of (\ref{eq:model})  from these points with this $I$ value blow up as $t \to \infty$.  All of the points between $p_1$ and $p_2$ on $\cal{C}$  lie below ${\cal F}(I)$ and hence yield finite-time blow-up for this $I$.
So we can choose $\underline{t}_e(I) = t_1, \overline{t}_e(I)=t_2$ and find $\delta(I,t)$ for each $t \in (\underline{t}_e(I),\overline{t}_e(I))$ to achieve PIF, as desired.    
\end{proof}

\medskip

\noindent {\bf Remark:}   Our proof shows directly that there are open intervals of $t_{e}, I_{exc}$ such that for each choice in this interval, there exists a $\delta(t_e,I_{exc})$ for which PIF occurs. Of course, this $\delta$ is not unique, since for  any larger $\delta$, the application of excitation $I_{exc}$ on the time interval $(t_e,t_e+\delta)$ will also yield a spike.  

\medskip

In some cases, it may be of interest to consider what happens when the duration $\delta$ of excitation is fixed.
Suppose we select $I_{exc}$ for which (C1) and (C2) hold and pick a point on ${\cal C}$ that lies below ${\cal F}(I_{exc})$.  Denote the time of flow from $(v_i(t_i),w_i(t_i))$ to that point under system (\ref{eq:model}) with $I=0$ by $t_{off}$. 
 If we set $I=I_{exc}$ and solve (\ref{eq:model}) with that point as the initial condition, how long do we have to wait before setting $I=0$ in order to ensure that the trajectory will yield a spike even with the $I=0$ reset?
 Geometrically, the requirement is that if we solve system (\ref{eq:model}) with $I=I(t)$, with $t_e=t_{off}$ and some $\delta$, we have that the solution at time $t_e + \delta$ lies below the threshold curve ${\cal F}(0)$, defined for $I=0$.  
 
To make this idea precise, recall that $\Phi_0(t)$ denotes the trajectory of system (\ref{eq:model}) with $I=0$ and with initial condition $(v_i(t_i),w_i(t_i))$; we will now follow this trajectory from time $t=0$ to time $t=t_e$.  
 Let $\Phi_{I_{exc}}(t)$ denote the trajectory of system (\ref{eq:model}) with $I=I_{exc}$ and initial condition $\Phi_{I_{exc}}(0)=\Phi_0(t_e)$.  Our construction gives us the following result.
 
 \medskip
 
\begin{proposition} PIF occurs for this choice of $I_{exc}, t_e$ and fixed $\delta$ if and only if $\Phi_{I_{exc}}(\delta)$ lies below ${\cal F}(0)$.
\end{proposition}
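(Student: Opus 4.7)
The plan is to deduce this directly from the defining property of the firing threshold curve ${\cal F}(0)$, applied to the state obtained at the moment the excitation is switched off. The trajectory under consideration (after the inhibition phase that brings the system to $(v_i(t_i),w_i(t_i))$) consists of three pieces glued end-to-end: an $I=0$ piece on $[0,t_e]$ arriving at $\Phi_0(t_e)$, then an $I=I_{exc}$ piece on $[0,\delta]$ arriving at $\Phi_{I_{exc}}(\delta)$, and finally an $I=0$ piece starting at $\Phi_{I_{exc}}(\delta)$. Assuming $\Phi_{I_{exc}}(\delta)$ is a well-defined finite point, the question of whether the full concatenation spikes reduces to the asymptotic fate of the $I=0$ flow from this state.

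First I would invoke the characterization of ${\cal F}(0)$ developed earlier in Section~\ref{sec:PIF}: it is the boundary between the basin of attraction ${\cal A}(0)$ of $(v^*,w^*)$ and the finite-time blow-up set ${\cal B}(0)$, and to the right of the $w$-nullcline it is the graph of an increasing function, so ``above'' and ``below'' are unambiguous. Consequently, the $I=0$ orbit from $\Phi_{I_{exc}}(\delta)$ blows up in finite time if and only if $\Phi_{I_{exc}}(\delta)$ lies strictly below ${\cal F}(0)$. This yields both implications in one stroke: if $\Phi_{I_{exc}}(\delta)$ lies below ${\cal F}(0)$, the residual $I=0$ flow generates the spike and PIF occurs; conversely, if PIF occurs but no blow-up has happened during $[0,\delta]$ under $I=I_{exc}$, then the spike must be produced by the residual $I=0$ flow, forcing $\Phi_{I_{exc}}(\delta) \in {\cal B}(0)$.

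The only bookkeeping issue worth flagging is the implicit well-definedness of $\Phi_{I_{exc}}(\delta)$: if $\delta$ exceeds the finite blow-up time of the $I=I_{exc}$ flow from $\Phi_0(t_e)$, then the excitation phase already spikes and $\Phi_{I_{exc}}(\delta)$ is not finite. In that degenerate regime PIF holds trivially, and the statement is to be read in the generic regime where $\Phi_{I_{exc}}(\delta)$ is finite; the equivalence is otherwise an unpacking of definitions. I do not anticipate any substantive technical obstacle: the construction of ${\cal F}(0)$ in the preceding analysis does all of the real work, and the proposition is essentially the translation of that construction into the language of a fixed-duration excitation protocol.
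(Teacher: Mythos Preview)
Your proposal is correct and matches the paper's treatment: the paper does not give a separate proof but simply states ``Our construction gives us the following result,'' treating the proposition as an immediate consequence of the defining property of ${\cal F}(0)$ applied to the state at time $t_e+\delta$, which is exactly what you have unpacked. Your remark about the degenerate case where the $I=I_{exc}$ flow itself blows up before time $\delta$ is a reasonable clarification that the paper leaves implicit.
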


Next, suppose that $\delta>0$ is fixed and that for $I_{exc}=I$ for some choice of $I>0$, we let ${\cal P}_I(\delta)$ denote the points on ${\cal C}$ for which there is a choice of $t_e$ such that PIF occurs with that fixed $\delta$.  A final result on PIF is:

\begin{proposition}  Fix $\delta > 0$.  If $(v_1,w_1) \in {\cal P}_I(\delta)$ with $v_1>cw^-/b$ 
and $\tilde{I} > I$, then $(v_1,w_1) \in {\cal P}_{\tilde{I}}(\delta)$ as long as $(v_i(t_i),w_i(t_i))$ lies above ${\cal F}(\tilde{I})$, as needed for (C1).
\end{proposition}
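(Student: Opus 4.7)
The first move is to invoke the preceding proposition, which asserts that $(v_1,w_1)\in\mathcal{P}_I(\delta)$ holds if and only if $\Phi_I(\delta)\in\mathcal{B}(0)$, where $\Phi_I$ denotes the solution of (\ref{eq:model}) with constant input $I$ and initial datum $(v_1,w_1)$. Since the correspondence $t_e\mapsto\Phi_0(t_e)$ is independent of the excitation level, the same $t_e$ with $\Phi_0(t_e)=(v_1,w_1)$ works for both $I$ and $\tilde I$; together with the hypothesis that (C1) holds for $\tilde I$ (which also yields (C2) and (C3) by the ordering of the threshold curves established in the proof of Theorem~\ref{thm:PIF}), the claim $(v_1,w_1)\in\mathcal{P}_{\tilde I}(\delta)$ reduces to showing $\Phi_{\tilde I}(\delta)\in\mathcal{B}(0)$.

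The heart of the argument is then a comparison of the two trajectories $\Phi_I(t)=(v(t),w(t))$ and $\Phi_{\tilde I}(t)=(\tilde v(t),\tilde w(t))$ emanating from the common point $(v_1,w_1)$ under constant inputs $I<\tilde I$. Setting $u=\tilde v-v$ and $x=\tilde w-w$, these differences satisfy $u(0)=x(0)=0$ and
\begin{align*}
\dot u&=\bigl(F(\tilde v)-F(v)\bigr)-x+(\tilde I-I),\\
\dot x&=bu-cx,
\end{align*}
with $\dot u(0)=\tilde I-I>0$ and $\dot x(0)=0$, so that $u(t),x(t)>0$ for $t$ in a right neighborhood of $0$. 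The principal technical claim is that $u(t)>0$ on all of $(0,\delta]$. I would argue this by contradiction: if $T^*\in(0,\delta]$ were the first return of $u$ to zero, then $\tilde v(T^*)=v(T^*)$ gives $F(\tilde v(T^*))=F(v(T^*))$, whence $\dot u(T^*)=-x(T^*)+(\tilde I-I)$; forcing $\dot u(T^*)\le 0$ requires $x(T^*)\ge\tilde I-I$. Combined with the explicit representation $x(t)=b\int_0^t e^{-c(t-s)}u(s)\,ds$, together with the hypothesis $v_1>cw^-/b$, which places $(v_1,w_1)$ to the right of the leftmost $v$-value visited by $\mathcal{C}$ and hence controls how fast $w$ can build up along $\Phi_I$, one should derive the required contradiction.

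Once $u(\delta)>0$ (and consequently $x(\delta)>0$) has been established, the endpoint $\Phi_{\tilde I}(\delta)$ lies strictly to the right of $\Phi_I(\delta)$ in the $v$-coordinate. Writing the relevant branch of $\mathcal{F}(0)$ as the graph $w=W(v,0)$ of a strictly increasing function of $v$, the strict gap $W(v(\delta),0)-w(\delta)>0$ inherited from $\Phi_I(\delta)\in\mathcal{B}(0)$, together with a local slope comparison bounding $x(\delta)$ by $W(\tilde v(\delta),0)-w(\delta)$ via the integral formula for $x$, delivers $\tilde w(\delta)<W(\tilde v(\delta),0)$, i.e.\ $\Phi_{\tilde I}(\delta)\in\mathcal{B}(0)$. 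The main obstacle I anticipate lies in the contradiction argument of the previous paragraph: the integral bound on $x(T^*)$ has to be sharp enough to preclude $x(T^*)\ge\tilde I-I$, and producing this sharpness is precisely the role played by the geometric hypothesis $v_1>cw^-/b$.
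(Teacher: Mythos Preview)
Your reduction via the preceding proposition is correct, and comparing $\Phi_I$ with $\Phi_{\tilde I}$ is the right idea, but the execution has two real gaps that the paper avoids by a different and much simpler route.

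The paper compares the two trajectories as \emph{curves in the $(v,w)$ plane} rather than at fixed times. Starting from $(v_1,w_1)$ with $v_1>cw^-/b$ (so that $w'>0$ along both trajectories), each trajectory can be written as a graph $w=w(v)$ with slope $dw/dv=(bv-cw)/(F(v)-w+I)$. Increasing $I$ strictly decreases this slope, so the $\tilde I$-curve lies below the $I$-curve for all $v>v_1$; if they were to meet again, the same inequality would apply at the point of contact. Two things follow immediately: the time of passage to any section $\{v=\bar v\}$ is shorter with $\tilde I$ than with $I$, and since $W(v,0)$ is increasing in $v$, the $\tilde I$-trajectory crosses $\mathcal F(0)$ at a smaller $v$-value, hence at an earlier time. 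Since the $I$-trajectory has crossed $\mathcal F(0)$ by time $\delta$, so has the $\tilde I$-trajectory, and it stays below thereafter.

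Your fixed-time argument runs into trouble at both steps you flag. First, in the contradiction argument you correctly obtain $x(T^*)\ge\tilde I-I$, but the representation $x(T^*)=b\int_0^{T^*}e^{-c(T^*-s)}u(s)\,ds$ gives no upper bound on $x(T^*)$: nothing prevents $u$ from being large for a while and then returning to zero, and the hypothesis $v_1>cw^-/b$ does not control $\int u$. (The claim $u>0$ is in fact true, but it follows from the graph comparison above, not from your integral inequality.) Second, and more seriously, even granting $u(\delta)>0$ and $x(\delta)>0$, you have only placed $\Phi_{\tilde I}(\delta)$ to the upper right of $\Phi_I(\delta)$. Since $\mathcal F(0)$ is also increasing, this by itself does \emph{not} force $\Phi_{\tilde I}(\delta)\in\mathcal B(0)$: you would need $x(\delta)<W(\tilde v(\delta),0)-w(\delta)$, which amounts to comparing the slope of $\mathcal F(0)$ with the ratio $x(\delta)/u(\delta)$, and there is no general reason for such a bound to hold. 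The paper sidesteps this entirely by tracking the \emph{crossing time} of $\mathcal F(0)$ rather than examining a single time slice.
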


\begin{proof}
We need to establish that the time of passage for trajectories of (\ref{eq:model}) with fixed $I$ from $(v_1,w_1)$ to ${\cal F}(0)$ decreases as $I$ increases.
 But this is easy to show because initially, $dv/dt$ increases in $I$ while $dw/dt$, which is positive since $v>cw^-/b$ is independent of $I$.  Hence, for sufficiently small time, the trajectory with $I$ lies above that with $\tilde{I}$.  The trajectory from $(v_1,w_1)$ with $I$ subsequently remains bounded below by that with $\tilde{I}$, since if they were to meet again, the same reasoning would apply.  Thus, the time of passage from $(v_1,w_1)$ to any fixed section $\{ v = \bar{v} > v_1 \}$  is shorter with $\tilde{I}$ than with $I$.  Finally, $W(v,0)$ increases in $v$, so that the trajectory with $I$ must reach a larger $v$-value than that with $\tilde{I}$ in order to cross the $I=0$ firing threshold curve ${\cal F}(0)$.  Thus, the time of passage from $(v_1,w_1)$ to ${\cal F}(0)$  is longer with $I$ than with $\tilde{I}$, as desired.  
  \end{proof}

  \section{Slope detection}\label{sec:SlopeDetection}
  	Another property associated with type III excitability is the  phenomenon of \emph{slope detection} that arises in response to continuous time-varying input. A neuron display slope detection if it does not fire to too rapid too slow inputs, but selectively fires to inputs with rates of change in a specific range \cite{meng2012}. To study slope detection, we consider here a simple family of \emph{tent} input:
	\begin{equation}
	\label{eq:tent}
	I_A^{\beta}(t)=\begin{cases}
		\beta t & 0\leq t \leq T_A^{\beta},\\
		\beta (2T_A^{\beta}-t) & T_A^{\beta}\leq t \leq 2T_A^{\beta},\\
		0 & \text{otherwise},
	\end{cases}
	\end{equation}
	where $T_A^{\beta}=A/\beta$. These stimuli are particularly attractive from a mathematical viewpoint because their piecewise-linear nature is convenient for analytical developments, and also their simple form allows independent variation of the amplitude (parameter $A$) and slope (parameter $\beta$). We note that because of the choice of fixing amplitude independently of the slope, the total amount of current injected defined as the integral of the instantaneous current $I_A^{\beta}$, is inversely proportional to the slope:
	\[\int_0^{2T_A^{\beta}} I_A^{\beta}(s)\,ds = A\,T_A^{\beta} = \frac{A^2}{\beta},\] 
	and thus, a non-monotone dependence of response on input slope here can be rephrased as a non-monotone dependence on total injected current.
	
	\begin{figure}[!h]
		\centerline{\includegraphics[width=\textwidth]{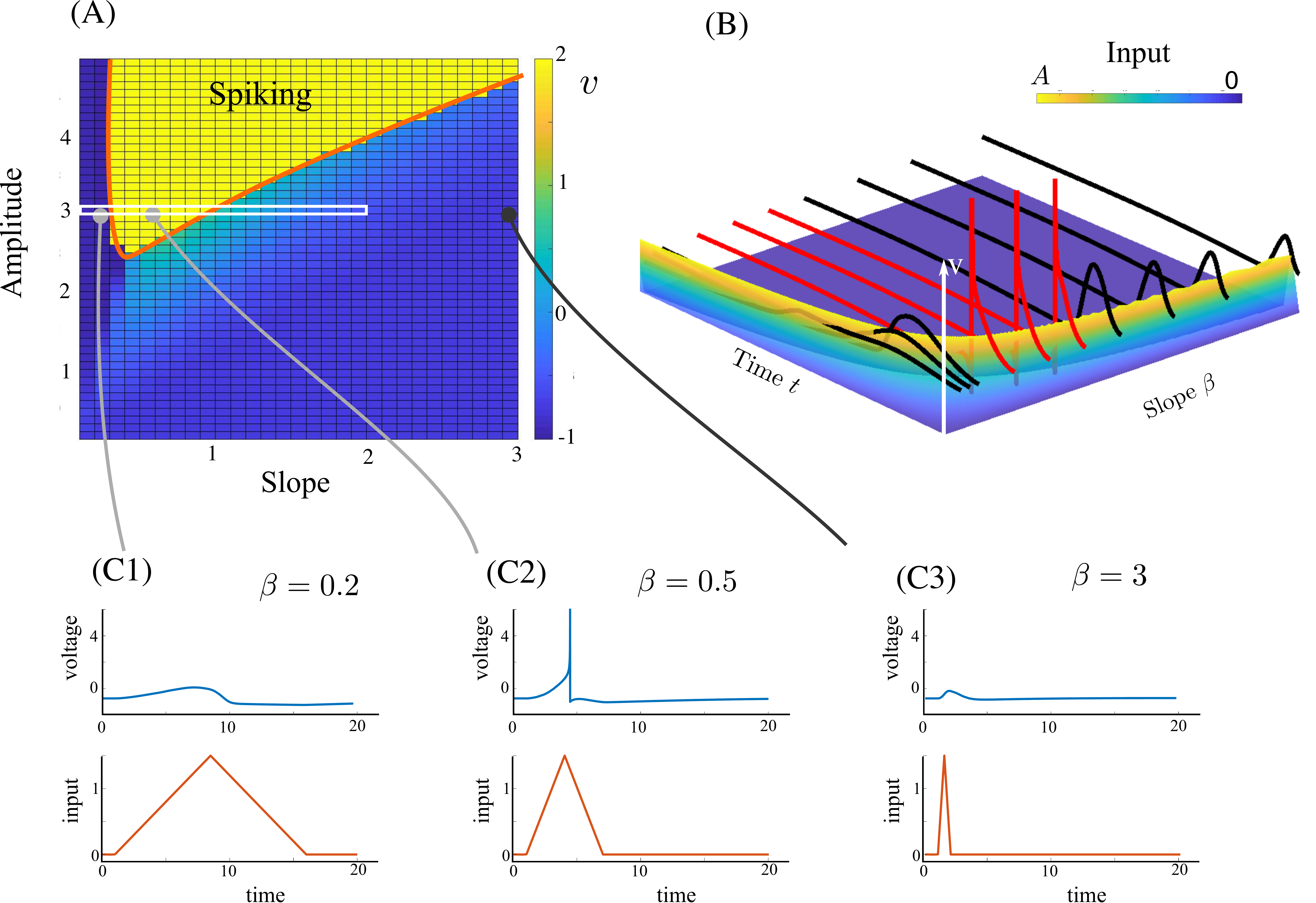}}
		\caption{Slope detection in the quartic model $F(v)=(v-\gamma/b)^4+2(v-\gamma/b)$ with $a=1$, $b=0.4$, $\gamma=0.3$ and $c=0$. (A) Starting from rest, responses to the tent input either spike (yellow) or remain subthreshold (colormap, showing the final voltage value at the end of the stimulation) as a function of stimulus slope and amplitude. The orange line delineates trajectories that fire an action potential from those that do not. For sufficiently large input amplitude, the system displays slope detection, with no spike for too low or too high slope (C1, C3 and black trajectories in B) and spiking for intermediate slope (A2 and red trajectories in B).}
		\label{fig:slope}
	\end{figure}

	Numerical simulations displayed in Fig.~\ref{fig:slope} show evidence of slope detection in an example from our class of models of interest. For fixed, sufficiently large stimulus amplitude, we observe a triphasic response as a function of the stimulus slope, with subthreshold responses for small enough or large enough slope and spiking for intermediate slopes, as visible in the examples depicted in Fig.~\ref{fig:slope}B, and, for three fixed values of slope, in panel  Fig.~\ref{fig:slope}C. Heuristically, the mechanism of slope detection arises through the conjunction of two elements: (i) the ability of the neuron to remain in the vicinity of a stable fixed point and not initiate a spike during the application of an input, and (ii) the timescale of spike initiation compared to the input slope. 
	\begin{enumerate}
		\item For sufficiently small slope, the stimulation acts as a slowly varying dynamical bifurcation parameter. In regimes associated with type III excitability, a stable equilibrium persists for a wide range of constant input levels $I$ (see Fig.~\ref{fig:heuristic}A, black line). When $I$ is varied slowly compared to the typical relaxation time towards the stable equilibrium, trajectories that start near the fixed point will remain in the vicinity of the $I$-dependent fixed point for all times, and therefore will arrive back near the resting state associated with no input at the end of the transient stimulation (see e.g., in Fig.~\ref{fig:heuristic}A, the darkest blue curves more closely following the black line of fixed points). Because this resting state is stable, one can find small enough slopes such that the state of the neuron after the end of stimulation lies in the attraction basin of that fixed point, resulting in a return to rest without a spike.
		\item For intermediate to large slopes, the stimulus may rise too quickly to allow the trajectory to closely track the fixed point and the neuron starts initiating a spike. This is visible in Fig.~\ref{fig:heuristic}A and B, starting with the trajectory highlighted with a black arrow. Two cases therefore arise depending on the decay rate of the stimulus relative to the spiking dynamics:
		\begin{enumerate}
			\item the spike fully unfolds when the slope is not sufficiently large for the input to reach its maximum and decay back down before the spike is fired (trajectories terminated with an arrow in Fig.~\ref{fig:heuristic}A,B);  
			\item alternatively, for large enough slopes, the stimulus may decay fast enough to low input levels and capture the trajectory back in the attraction bassin of the fixed point before the spike can materialize. Actually, larger slopes correspond to shorter durations of input and smaller total input, leaving the neuron voltage less affected and thus trajectories closer to the constant trajectory equal to the resting potential (dashed line in Fig.~\ref{fig:heuristic}B). 
		\end{enumerate}
	\end{enumerate}

	\begin{figure}[!h]
		\centerline{\includegraphics[width=\textwidth]{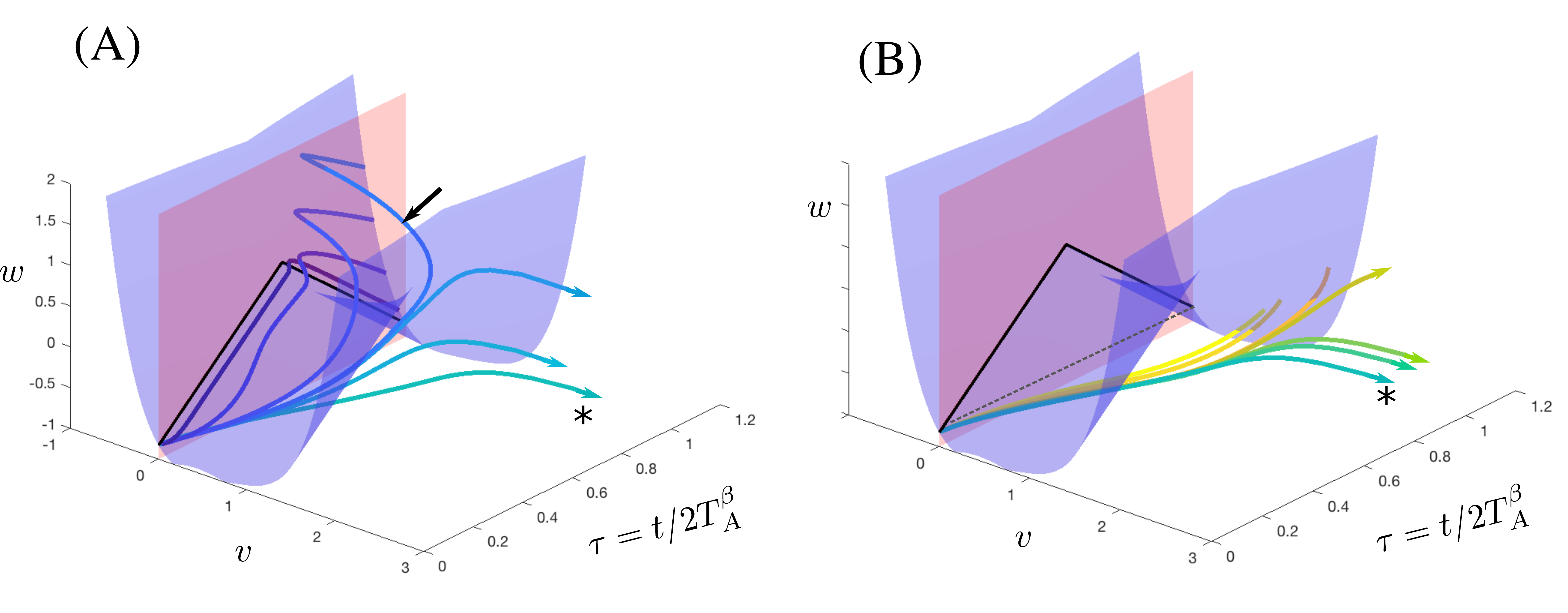}}
		\caption{Trajectories in the 3-dimensional space $(v,w,{t/2T_A^{\beta}})$ for various slopes. The time rescaling chosen makes input duration independent of slope $\beta$, allowing for comparison of trajectories. Blue surface: $v$-nullcline; red plane: $w$-nullcline; black line: fixed points plotted for input with value $I_A^{\beta}(2T_A^{\beta} \tau)$ (independent of $\beta$ as indicated). Three-dimensional curves represent trajectories; color encodes input slope. (A): low slopes (from blue to green, 0.03, 0.1, 0.2, 0.25, 0.254, 0.3, 0.6). Note how trajectories for small slopes (darkest blue) closely follow the curve of fixed points. Black arrow highlights trajectory getting transiently outside the attraction basin of the fixed point ($\beta=0.25$) but not spiking. For slope slightly larger ($\beta=0.254$), spiking arises (arrowheaded trajectories). Trajectory $\beta=0.6$ (starred) is present in both panels, A and B. (B): Starting from $\beta=0.6$ (starred), the slope is increased to $\beta=2$ and spiking is lost for some slope between $\beta=0.94$ (yellow trajectory with arrowhead) and $0.97$ (orange trajectory without arrowhead). Note how larger slopes  correspond to straighter trajectories that approaching a constant trajectory (dashed gray).}
		\label{fig:heuristic}
	\end{figure}

As indicated in the introduction, we shall relax here the assumption of strict type III excitability and allow $c>0$. We will need to ensure, however, that the input applied remains  below a maximal value $I_M$ for which the fixed point persists and remains stable. This occurs when the applied current remains below the saddle-node and Hopf bifurcation lines, which, following~\cite{touboul:08}, provides the condition:
\[I_M(b,c)=\min\left(J(b/c,b/c),J(b/c,c) \right)\]
with 
\[J(x,y)=x\left[F'\right]^{-1}(y)-F\left(\left[F'\right]^{-1}(y)\right).\]
By convention, we define $I_M(b,0)=\infty$ for any $b>0$. For any $I<I_M$, the system has a stable fixed point $(v^*(I),w^*(I))$. Furthermore, we define $I_0(b,c)=\frac{b}{c}v_f-F(v_f)$ where $v_f$ is the point where $F$ reaches its minimum ($v_f=(F')^{-1}(0)$); see assumption (A2). For $I<I_0$, the stable fixed point is also such that $F'(v^*(I))<0$.

The heuristic description of slope detection can be formalized as follows: 

\begin{theorem}\label{thm:Slope}
	Consider the solution of the bidimensional integrate-and-fire model~\eqref{eq:model} with tent input $I(t)=I_A^{\beta}(t)$ given by (\ref{eq:tent}). 
	\renewcommand{\theenumi}{(\roman{enumi})}
	\begin{enumerate}
		\item For any fixed amplitude {$A<I_M(b,c)$}, the solution is defined for all times when slope $\beta$ is large enough. 
		\item For any fixed amplitude {$A<\min(I_M(b,c),I_0(b,c))$}, the solution is defined for all time when $\beta$ is small enough. 
		\item Eventually, for any $A>A_0$ for some $A_0>0$, there exists a non-empty set of slope values for which the solutions blow up in finite time. 
	\end{enumerate}
\end{theorem}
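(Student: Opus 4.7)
My plan treats the three items separately, exploiting the fact that the tent stimulus $I_A^{\beta}$ has two independent scales --- total duration $2A/\beta$ and slope $\beta$ --- via continuity, slow--manifold, and comparison arguments.

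\textbf{Part (i).} For large $\beta$ the input duration $2T_A^{\beta}=2A/\beta$ tends to $0$ while $|I(t)|\leq A$ uniformly. Restricting to a compact neighborhood of $(v^*,w^*)$ where the vector field of \eqref{eq:model} is Lipschitz, a Gr\"onwall-type estimate gives $\|(v(2T_A^{\beta}),w(2T_A^{\beta}))-(v^*,w^*)\|=O(1/\beta)$. Since $(v^*,w^*)$ is asymptotically stable for the $I=0$ flow by (A3), some neighborhood of $(v^*,w^*)$ lies in $\mathcal{A}(0)$; for $\beta$ large enough the end state of the input phase sits in that neighborhood, and the subsequent $I=0$ dynamics bring it to $(v^*,w^*)$.

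\textbf{Part (ii).} For small $\beta$ the tent varies slowly and I invoke an adiabatic principle. The hypothesis $A<I_M(b,c)$ yields a stable equilibrium $(v^*(I),w^*(I))$ for every $I\in[0,A]$, while $A<I_0(b,c)$ forces $F'(v^*(I))<0$ uniformly on $[0,A]$; together with $b>0$, $c\geq 0$, the Jacobian at $(v^*(I),w^*(I))$ has trace $F'(v^*(I))-c<0$ and determinant $b-cF'(v^*(I))>0$, bounded uniformly away from the imaginary axis. Treating $I$ as a slow variable, the curve $\Gamma=\{(v^*(I),w^*(I)):I\in[0,A]\}$ is a normally hyperbolic attracting slow manifold of the limiting $\beta=0$ system. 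By Fenichel's theorem --- or explicitly by a quadratic Lyapunov function in $(v-v^*(I),w-w^*(I))$ tracking the moving equilibrium --- the trajectory starting at $(v^*,w^*)$ stays within $O(\beta)$ of $\Gamma$ on the whole interval $[0,2T_A^{\beta}]$, so at $t=2T_A^{\beta}$, where $I=0$, the state is $O(\beta)$-close to $(v^*,w^*)\in\mathcal{A}(0)$ and global existence follows as in (i).

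\textbf{Part (iii).} I reduce blow-up for the tent to blow-up under an unbounded ramp. Fix any $\beta_0>0$ and consider $(\tilde v,\tilde w)$, the solution of \eqref{eq:model} with input $\tilde I(t)=\beta_0 t$ for $t\geq 0$ and initial condition $(v^*,w^*)$. The key claim is that $(\tilde v,\tilde w)$ blows up at some finite time $T^*=T^*(\beta_0)$. Given this, set $A_0=\beta_0 T^*$: for any $A>A_0$ one has $T^*<T_A^{\beta_0}=A/\beta_0$, so the tent $I_A^{\beta_0}$ agrees with $\tilde I$ on $[0,T^*]$ and the tent-driven trajectory also blows up by $T^*$; thus $\{\beta_0\}$ is a non-empty witness slope set for every $A>A_0$.

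The main obstacle is the finite-time blow-up under the ramp. The plan is: first show that on any bounded window $[0,t_1]$ the solution $(\tilde v,\tilde w)$ remains bounded (continuous dependence on bounded data), so the state at $t_1$ is some $(\bar v,\bar w)$. Pick $t_1$ large enough that $\beta_0 t_1$ strictly exceeds $I_M(b,c)$ (the case $c=0$ is trivial since $I_M=\infty$ is not needed: $F'(0)<0$ suffices to push $\tilde v$ away from the resting branch once $\tilde I$ is large). For $t\geq t_1$ the shifted $v$-nullcline $w=F(v)+\beta_0 t$ lies strictly above the $w$-nullcline everywhere, so the wedge between them is forward invariant with $\tilde v'>0$. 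The $w$-equation then yields $\tilde w(t)\leq C_1+(b/c)\max_{s\leq t}\tilde v(s)$ for $c>0$, and $\tilde w(t)=\tilde w(t_1)+b\int_{t_1}^{t}\tilde v$ for $c=0$; substituting into $\tilde v'=F(\tilde v)-\tilde w+\beta_0 t$ produces a scalar differential inequality in which the superlinear term $F(\tilde v)$ eventually dominates the linear-in-$\tilde v$ and linear-in-$t$ contributions. Together with the integrability hypothesis $\int^{\infty}dx/F(x)<\infty$ from (A2), a standard comparison argument delivers blow-up of $\tilde v$ in finite time. Finally, monotonicity of the flow in $I$ guarantees that the blow-up persists as $A$ increases past $A_0$, completing the proof of (iii).
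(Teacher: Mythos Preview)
Your treatments of (i) and (ii) match the paper's: a Gr\"onwall estimate on a compact neighborhood for large $\beta$, and an adiabatic/Lyapunov argument tracking the moving equilibrium $(v^*(I),w^*(I))$ for small $\beta$. The paper splits (ii) into a quadratic Lyapunov function for $c>0$ and a variation-of-constants argument for $c=0$, but the underlying idea is the same as yours.

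Part (iii), however, has a genuine gap. Your reduction to the unbounded ramp $\tilde I(t)=\beta_0 t$ asserts that the ramp solution blows up for \emph{any} $\beta_0>0$; this is false when $c=0$ and $\beta_0$ is small. For $c=0$ the substitution $W=w-F(0)-\beta_0 t$ turns the ramp-driven system into the \emph{autonomous} system
\[
\dot v = F(v)-F(0)-W,\qquad \dot W = bv-\beta_0,
\]
which has a unique equilibrium at $\bigl(\beta_0/b,\;F(\beta_0/b)-F(0)\bigr)$ with Jacobian trace $F'(\beta_0/b)$ and determinant $b>0$. For $\beta_0/b<v_f$ this equilibrium is asymptotically stable, and since the initial condition $(v^*,w^*)=(0,F(0))$ corresponds to $(v,W)=(0,0)$, which lies in its basin for small $\beta_0$ (by continuity from the $\beta_0=0$ case covered by (A3)), the ramp trajectory satisfies $v(t)\to\beta_0/b$ and never blows up. Your parenthetical remark that ``$F'(0)<0$ suffices to push $\tilde v$ away from the resting branch'' is exactly backwards: $F'(0)<0$ is what makes the resting branch attract. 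Even for $c>0$ your wedge/comparison sketch leaves the crucial step --- that the trajectory actually enters a forward-invariant blow-up region --- unverified.

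The paper's route for (iii) is structurally different. It first builds an explicit half-plane $\Gamma=\{v>\alpha w+\zeta\}$ that is forward-invariant for $I\geq 0$ and inside which finite-time blow-up is guaranteed by (A2). It then does \emph{not} fix $\beta$; instead it ties $\beta$ to $A$ via the affine law $\beta=2b(\alpha w^*(A)+\zeta)$, and shows by a differential inequality (bounding $w$ above linearly in $t$ under the hypothesis that the trajectory has not yet entered $\Gamma$) that the trajectory must enter $\Gamma$ before time $\tfrac{3}{2}T_A^{\beta}$ once $A$ is large enough. This coupling of $\beta$ to $A$ is what sidesteps the obstruction above: the slope grows with the amplitude, so one is never in the small-$\beta$ tracking regime. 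Your approach could likely be repaired by restricting to sufficiently large $\beta_0$ and then carrying out a rigorous blow-up argument for the ramp, but as written the key claim fails.
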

Note that requesting $I<I_0$ for point (ii) is \emph{not} a necessary condition. Indeed, when $c>0$, the property will remain true for some choices of slopes and $A>I_0$, and another, sharper inequality can be found. Refining the boundary for $A$ would require a substantially more complex proof, which may makes the argument more obscure. Moreover, in the case where $c$ is small, refining the boundary would only allow a marginal extension of $I$ above $I_0$  (since, for stability, one needs to ensure $F'(v^*(I))<c$). Hence, we simply consider $I<I_0$ in our proof and statement. 

For theorem \ref{thm:Slope} to truly give slope detection, it is necessary that $A_0 < \min(I_M(b,c),I_0(b,c))$.  As long as this relation holds, the theorem formally expresses the slope detection property, since blow-up in the voltage variable in eq.~\eqref{eq:model} corresponds exactly to spiking of the  model neuron. Note that the amplitude $A_0$ is typically much smaller than $I_M$ and $I_0$ when $c$ is small compared to $b$ (the upper-bounds in Theorem~\ref{thm:Slope}(i,ii) actually both become trivial -- i.e., the righthand side becomes infinity -- when $c=0$).

\begin{proof}	
	The proof proceeds in three steps following the heuristic description of slope detection given above. Throughout the proof, we use $(v(t),w(t))$ to denote the solution to equation~\eqref{eq:model} with  $I_A^{\beta}(t)$ and initial condition $(v^*,w^*)$ (i.e., the fixed point of~\eqref{eq:model} associated with $I=0$). To prove that there is no spiking for large or small enough $\beta$, we will consider whether the orbit at the end of the stimulation (i.e. $(v(2T_A^\beta),w(2T_A^\beta))$) belongs to the attraction basin of $(v^*,w^*)$.  
	
	\emph{Step 1: No spiking for sufficiently large slope.} To show the absence of blow up for large slope, we show that for $\beta$ large enough, the whole orbit remains arbitrarily close to the fixed point $(v^*,w^*)$ throughout the stimulation period (Fig.~\ref{fig:heuristic} (B)). To this purpose, we denote by $U$ an open interval containing $v^*$ and on which $F$ is $k$-Lipschitz continuous for some $k \geq \max(1,b,c)$. 
	
	Define a time $\tau$ by 
	\[\tau=\begin{cases}
		2T_A^\beta & \text{if for all } t\in [0,2T_A^{\beta}], \; v(t)\in U;\\ 
		\sup\{t\leq 2T_A^\beta, v(t) \in U\} & \text{otherwise. }
		\end{cases}
		\]
Let $\vert z\vert=\vert z_1\vert +\vert z_2\vert$ for $z=\binom{z_1}{z_2}$. Keeping in mind that the fixed point $(v^*,w^*)$ is an orbit of the dynamical system, we have for any $t\leq \tau$:
	\begin{align*}
		\left\vert \binom{v(t)}{w(t)}-\binom{v^*}{w^*}\right\vert &\leq \int_0^t \left\vert \binom{F(v(s))-F(v^*) - (w(s)-w^*) + I_A^{\beta}(s)}{ b(v(s)-v^*) - c (w(s)-w^*)}\right\vert\,ds\\
	    & \leq 2k\int_0^t \left\vert \binom{v(s)}{w(s)}-\binom{v^*}{w^*}\right\vert\,ds + \int_0^t\vert I_A^{\beta}(s)\vert\,ds\\
		& \leq 2k\int_0^t \left\vert \binom{v(s)}{w(s)}-\binom{v^*}{w^*}\right\vert\,ds + \frac{A^2}{\beta}
	\end{align*}
	and, by Gronwall's lemma,
	\[\left\vert \binom{v(t)}{w(t)}-\binom{v^*}{w^*}\right\vert \leq \frac{A^2}{\beta}e^{2kt}\leq \frac{A^2}{\beta}e^{\frac{4kA}{\beta}} \]
	since $\tau \leq 2T_A^\beta$ by definition. Since the upper bound goes to $0$ as $\beta \to \infty$, we have that $\tau=2T_A^{\beta}$ for $\beta$ large enough, and moreover, at the end of the stimulus application, the orbit is arbitrarily close to $(v^*,w^*)$. Because this is an attractive fixed point of the system in the absence of applied input, for large enough slopes the endpoint of the orbit belongs to the attraction basin of $(v^*,w^*)$, implying convergence of the orbit to that point, and, in particular, the absence of blow-up.
	
		\emph{Step 2: No spiking for sufficiently small slope.} For small slopes, the input acts as a slowly varying parameter, and the orbit will track the fixed point of system~\eqref{eq:model} as a function of the input (Fig.~\ref{fig:heuristic} (A)). Under the standing assumptions $I<I_M(b,c)$ and $b> 0$, there exists a stable fixed point satisfying the equations:
	\[\begin{cases}
		v^*(I)=\frac{c}{b} w^*(I), \\
		w^*(I) = F(v^*(I))+I.
	\end{cases}\]
	For small slopes $\beta$, we will show that the orbits remain for all times close to $(v^*(I(t)),w^*(I(t)))$ for $I(t)=I_A^{\beta}(t)$. Formally, denoting
\[(v(t),w(t)) = (v^*(I(t)) + v_1(t),w^*(I(t))+ w_1(t)),\]
it is easy to show that
	\begin{equation}\label{eq:perturbation}
		\begin{cases}
		\dot{v_1}= F'(v^*(I(t)))v_1 -w_1 -(v^*)'(I(t))I'(t)+v_1^2 \eta(v_1)\\
		\dot{w_1}= bv_1-cw_1-(w^*)'(I(t)) I'(t),
		\end{cases}
	\end{equation}
	with
	\begin{equation}\label{eq:vprime}\begin{cases}
		(v^*)'(I) = c/[b-cF'(v^*(I))],\\
		(w^*)'(I) = b/[b-cF'(v^*(I))].
	\end{cases}\end{equation}
	We observe that when $c=0$, the equations~\eqref{eq:perturbation}, to leading order, simplify to a linear equation with constant coefficients and piecewise constant input, while in the case $c>0$ the equation has an additional non-homogeneous 
	part. We shall treat these cases separately, and prove that $(v_1,w_1)$ are arbitrarily small when $\beta$ is small using (i) a Lyapunov function-type argument for $c>0$ and (ii) the variation of constants formula for $c=0$. 
	
\underline{\textbf{Case $c>0$}}	When $c>0$, because the linear term $F'(v^*(I(t))$ depends on time, equation~\eqref{eq:perturbation} is a differential equation with non-constant coefficients, with non-homogeneous input proportional to the slope coefficient. Even in the absence of that input, it is well-known that that the fixed point $0$ may not be stable although the linear part may have strictly negative eigenvalues for all times. Here, the situation is, however, favorable and the trajectories remain arbitrarily close to $0$. Indeed, letting $N(t)=\frac 1 2 (v^2_1(t)+b^{-1}w^2_1(t))$, we have:
	\[\dot{N} = F'(v^*(I(t))) v_1^2 - \frac{c}{b} w_1^2+v_1^3\eta(v_1)-\Big[(v^*)'(I(t))v_1+b^{-1} (w^*)'(I(t))w_1\Big]I'(t)\]
	and letting $\delta=\min(\inf_{I\in [0,A]} \vert F'(v^*(I))\vert, c 
	)$ (the assumption $I<I_0(b,c)$ ensures $\delta>0$), we find:
	\[\dot{N} \leq -\delta N(t)+v_1^3\eta(v_1)-\Big[(v^*)'(I(t))v_1+b^{-1} (w^*)'(I(t))w_1\Big]I'(t).\]
	Fix $R>0$ smaller than 1. Until the first exit time of the trajectory $(v_1(t),w_1(t))$ from the closed ellipse $B_R^2$ defined by
	\[B_R^2=\left\{(v,w) \in \R^2\;;\; \frac 1 2 \left(v^2+\frac{1}{b} w^2\right)\leq R\right\},\]
	 we find, by applying upper bounds and integrating, that
	\[N(t) \leq M_1 R^3 + M_2 \beta\]
	where $M_1=\delta^{-1}\sup_{\vert x\vert <1, I\in [0,A]} \vert \eta(x,I)\vert $ and $M_2=\delta^{-1}\sup_{I\in[0,A]} \vert (v^*)'+b^{-1}(w^*)'\vert $, which is finite under our assumptions (the denominator in~\eqref{eq:vprime} is the determinant of the Jacobian, which remains away from 0). Taking $R$ and $\beta$ small enough completes the proof. Explicitly, taking $R\leq \min(1,(M_1+M_2)^{-1/2})$ and $\beta<R^3$, we obtain:
	\[\frac{M_1 R^3 + M_2\beta}{R} \leq (M_1+M_2)R^2 \leq 1,\]
	implying that solution $(v_1,w_1)$ remains in $B_R^2$ for all times. In particular, choosing $R^\star$ such that 
	the attraction basin of $(v^*,w^*)$ contains the open ball $(v^*,w^*)+B^2_{R^\star}$, we can find $\beta$ small enough such that the trajectory  $(v(t),w(t))$ remains trapped in a tube $\tilde{B}^2_{R^\star}(t)=\{(v,w) : \ \frac{1}{2}((v-v^*(I(t)))^2+\frac{1}{b}(w(t)-w^*(I(t)))^2 \leq R^\star\}$ around $(v^*(I(t)),w^*(I(t)))$ for all $t\in [0, 2T_A^{\beta}]$, and therefore, at the end of the stimulation, will converge to the fixed point $(v^*,w^*)$. 
	
	
 
\underline{\textbf{Case $c=0$}} When $c=0$, the leading order part of the system simplifies into a linear equation with constant coefficients and piecewise-constant inhomogeneity:
	\begin{equation}\label{eq:perturbation2}
	\begin{cases}
		\dot{v_1}=-\nu v_1 -w_1 +v_1^2\eta(v_1)\\
		\dot{w_1}= b v_1 -I'(t)
	\end{cases}
	\end{equation}
	with $\nu=\vert F'(v^*)\vert=-F'(0)$. The proof used for $c>0$ fails here (since $\delta$ becomes equal to $0$), and we prove an analogous result based on the variation of constants formula. Writing the system in the coordinates given by the eigenvectors of the matrix $M$ associated with the linear part of system (\ref{eq:perturbation2}), we get
\[
\begin{cases}
	\dot{x}=\lambda_1 x + f(x,y) + \kappa(t) \mu_1 \\
	\dot{y}=\lambda_2 y + g(x,y) + \kappa(t) \mu_2
\end{cases}
\] 
where $\lambda_{1,2}$ are the eigenvalues of  $M$,
$\mu_{1,2}$ are complex numbers only depending on the eigenvector coefficients, and $f$ and $g$ are the projections of the term $v_1^2\eta(v_1)$ on the eigenvectors, rewritten in terms of the $(x,y)$ coordinates. Therefore, both functions are $O(\vert x\vert^2+\vert y\vert^2)$. Using the variation of constants formula, we compute, e.g., for the coordinate $x$, the inequality
\[\vert x(t)\vert \leq \int_0^t e^{\lambda_1^R(t-s)/2} \vert f(x(s),y(s))\vert \,ds + \frac{2\vert \beta \mu_1\vert}{\lambda_1^R}\]
where $\lambda_1^R<0$ is the real part of $\lambda_1$. An analogous formula is valid for $y$. The proof is completed by showing that the bound on $|x(t)|$ can be made arbitrarily small for $\beta$ small enough. To this purpose, define by $B_R^{1}$ the open ball of radius $R>0$ for the norm $\Vert (x,y)\Vert_1 = \vert x \vert + \vert y \vert$. For $R$ small, we have $\max_{B^1_R}(f(x,y),g(x,y)) \leq C R^2$ for some fixed constant $C$. Therefore, before the solution leaves $B_R^1$, we have:
\[\Vert (x(t),y(t))\Vert_1 \leq \frac{2 [C R^2 +\beta (\vert \mu_1\vert+\vert \mu_2\vert)]}{\min(\lambda_1^R,\lambda_2^R)}.\]
We conclude using a method analogous to the end of the argument for $c>0$, by choosing $R$ and $\beta$ small enough to ensure that the trajectory remains in $B_R^1$ for all times, and then take $R$ (and $\beta$ accordingly) small enough to ensure that the orbit belongs to the attraction basin of $(v^*, w^*)$ at the end of stimulation.



	

\emph{Step 3: Spiking for intermediate slope and sufficiently large amplitude.} We prove that, for fixed stimulation slope $\beta>0$, there exists some input amplitude for which the neuron fires. To this end, we shall show that there exists a linear function $B(A)$ (with non-zero slope) and a minimal amplitude $A_0$ for which the neuron with tent input of amplitude $A>A_0$ and slope $\beta=B(A)$ blows up in finite time. Thus, if we fix $\beta>0$ in the range of $B(A)$, spiking will occur for $A > B^{-1}(\beta)$, which readily implies statement (iii) of the theorem.

First,  we consider the dynamics of \eqref{eq:model} with $I=0$.  We notice that there exists a domain $\Gamma=\{(v,w); v> \alpha w+\zeta\}$ for some $\alpha>0$ and $\zeta \in \R$  that is forward invariant under the flow, such that any neuron with initial condition within $\Gamma$ blows up in finite time\footnote{In other words, the spiking threshold is to the left of the line $v=\alpha w+\zeta$.}. Indeed, for any $\alpha>0$ such that $(\alpha b + \frac{1}{\alpha}-c)>0$, we can find $\zeta$ such that for any $v\in\mathbb{R}$,
\begin{equation}
\label{eq:Fineq}
F(v) - (\alpha b + \frac{1}{\alpha}-c)v+(\frac 1 \alpha -c)\zeta >0,
\end{equation}
because $F$ is a convex function with $\lim_{v\to-\infty} F'(v)\leq 0$, implying that $F(v) - \mu v$ is a convex, lower-bounded function for $\mu>0$. Inequality (\ref{eq:Fineq}) being satisfied for all $v$ implies that the derivative with respect to time $t$ of $v(t)-\alpha w(t)-\zeta$ on the boundary $v=\alpha w + \zeta$ is strictly positive, since 
\[\dot{v}-\alpha \dot{w}=F(v) - (\alpha b + \frac{1}{\alpha}-c)v+(\frac 1 \alpha -c)\zeta>0,\]
which ensures that $\Gamma$ is a forward flow-invariant domain. Moreover, any trajectory entering $\Gamma$ will blow up in finite time, since in that region the voltage is a strictly increasing function, and $\dot{v}\geq F(v)- (v-\zeta)/\alpha 
=:g(v)$ with $1/g(v)$ integrable at infinity, classically implying blow-up of the solution. 


Next, we consider the stimulated system during the rising phase of the input, $t\leq T_A^{\beta}$. During that phase, $w$ is increasing and remains below the fixed point associated with constant input $I=A$, which we denote by $w_A = F(v^*(A))+A$. Let us in addition assume that $(v(t),w(t))$ belongs to $\Gamma^c$ during the stimulation. In that case, we have $v(t)\leq \alpha w_A+\zeta =: v_A$. When $c=0$, this implies that
\[\dot{w}=bv\leq bv_A,\]
and hence
\[w(t)\leq w^* + b\,v_A t.\]
For $c>0$, Gronwall's lemma (or direct solution of the $w$ equation with $v=v_A$) implies that:
\[w(t)\leq w^* e^{-ct}+\frac{b\,v_A}{c}(1-e^{-ct}).\]
Noting that $(1-e^{-ct}) \leq c\,t$ for $c>0$, we obtain a common inequality for both cases $c=0$ and $c>0$:
\[w(t)\leq \vert w^* \vert +b\,v_A t.\] 
Using this inequality on $w$, we can derive the following differential inequality for $v$:
\[\dot v \geq F(v)-\vert w^* \vert + (\beta - b\,v_A)t.\]
Until now, the slope $\beta$ has been arbitrary; we now choose $\beta=2\,\delta_A$ with $\delta_A=b\,v_A$, which is the line in the plane $(A,\beta)$ on which we will show blow up of the solution for sufficiently large $A$\footnote{Note that with this relationship between $A$ and $\beta$, the duration of the stimulus is now bounded even when $A$ increases (since slope increases as well), and saturates at a value $\frac{1}{b\alpha}$.}. We thus obtain the inequality 
\[\dot v \geq F(v)-\vert w^* \vert + \delta_A\,t.\]

We now consider the differential equation
\[\dot x = F(x)-\vert w^*\vert + \delta_A\,t\]
with initial condition $x(0)=v^*$, and let $v_{mid}$ denote the value of the solution halfway through the increasing part of the tent stimulus, i.e. $v_{mid}= x(t_0)$ with  $t_0=\frac 1 2 T_A^\beta = \frac{A}{4\delta_A}$. 
For $t\in [\frac 1 2 T_A^\beta,\frac 3 2 T_A^\beta]$, we have 
\[\dot x \geq F(x)-\vert w^*\vert + \delta_A\,t_0 =  F(x)-\vert w^* \vert+\frac{A}{4}.\]
For $A$ large enough, the solution to this equation blows up in finite time for all initial conditions.  Moreover, since the blow up time arises before the time
\[t_0+\int_{v_{mid}}^{\infty}\frac{dv}{F(v)-\vert w^*\vert+A/4}, \] 
which can be made arbitrarily close to $t_0$ as $A$ diverges, the solution can blow up arbitrarily fast by a simple application of the monotone convergence theorem (noting that $A\mapsto v_{mid}$ is increasing and that $1/F(v)$ is integrable). This in particular implies that $v$ will necessarily enter the set $\Gamma$ before $T_A^\beta$ by choosing $A$ large enough, contradicting our assumption. 

Therefore, the trajectory of $(v,w)$ enters $\Gamma$ before time $\frac{3}{2} T_A^\beta$. We have seen that $\Gamma$ is a trapping region for the system with no input. It remains a trapping region for the system with positive input, since for $I(t)\geq 0$ we have on the boundary of $\Gamma$ that 
\[\dot{v}-\alpha \dot{w} = F(v)-w + I(t)-\alpha b v +\alpha c w \geq F(v) - (\alpha b + \frac{1}{\alpha}-c)v+(\frac 1 \alpha -c)\zeta>0.\]
The trajectory is thus trapped in $\Gamma$, where $v$ blows up in finite time before time $\frac 3 2 T_A^\beta$. 

We conclude that for any $\alpha$, there exists $A^*(\alpha)$ large enough such that the neuron spikes for any $A>A^*(\alpha)$ and for $\beta=2b\,v_A = 2b[ \alpha w^*(A) + \zeta]$ for any choice of $(\alpha,\zeta)$ for which (\ref{eq:Fineq}) holds for all $v$. 

In conclusion, for a fixed $A> A^*(\alpha)$, there exists a non-empty set of values of the slope $\beta$ for which the neuron blows up in finite time, completing the proof.
\end{proof}

\section{Discussion}
\label{sec:Discussion}

Hodgkin classified neurons based on their responses to sustained injected currents, with Type I and Type II neurons firing repeatedly to current injection, albeit with differing f-I relations, and Type III neurons giving only a phasic or transient response before returning to quiescence \cite{hodgkin1948}.  Subsequent research has fleshed out this picture, establishing associations between Hodgkin's neuron types and various additional dynamic properties and mathematical features.  As a part of this development, Type III neurons have generally been considered as those that have a stable critical point at a  resting voltage for all levels of injected current.
In this paper, we have  shown analytically and numerically that two of the major properties commonly associated with Type III neurons, {\em post-inhibitory facilitation (PIF)} and {\em slope detection}, are in fact always present under conditions that give rise to phasic responses to sustained currents.  Moreover, the ubiquity of these properties extends to cases where the stable critical point is lost as injected current increases, although it should be maintained for input levels that are directly involved in these phenomena.  We have proven these results in a well-established, fairly general two-dimensional hybrid model that combines continuous dynamics with a discrete reset and that has been shown previously to generate a wide range of dynamics reminiscent of neuronal activity \cite{touboul:08,touboul:09,rubinDCDSI,rubinDCDSII}.  In this model, spiking corresponds to a finite-time blow-up of the voltage variable, during which the second, adaptation variable in the model remains finite.  After a spike occurs, both variables are reset, but both PIF and slope detection are based on whether or not a spike is fired at all and do not involve dynamics subsequent to spike generation, so the reset does not factor into our analysis.  
Because of the dynamic richness of this model, the generality of the dynamic principles that appear in our analysis, and the excellent match between the PIF and slope detection that we observe and similar dynamics in conductance-based models \cite{dodla,meng2012}, we strongly expect that our findings apply to neural models broadly beyond the one that represents the specific focus of our study.

There has been previous theoretical work on PIF, slope detection, and Type III dynamics.  Prescott et al. extended past work by Rinzel and Ermentrout \cite{rinzel1998} by unifying Type I, II and III dynamics as different parameter regimes within minimal two- and three-dimensional models.  Like Rinzel and Ermentrout, they characterized these behaviors dynamically in terms of bifurcations induced by input currents and they also provided interpretations in terms of interacting inward and outward currents \cite{prescott2008}.  These works did not, however, go on to discuss additional input processing properties such as PIF and slope detection.  Two other papers are thus more direct progenitors of the work that we present. 
First, Dodla et al. demonstrated that PIF occurs in experiments done with neurons of the medial superior olive (MSO)  of the gerbil auditory brainstem.  They captured this phenomenon in a Hodgkin-Huxley (HH) type model of the MSO with standard sodium, potassium and leak currents as well as a low-threshold potassium current.  They also explored this effect in a reduced, planar version of the model without repolarization and in an integrate-and-fire model made planar by the inclusion of a voltage-dependent threshold \cite{dodla}. 
Their phase plane analysis computationally demonstrated the role of the spike threshold in PIF, which features in our proofs.
Second, Meng et al. also considered the HH-type MSO model and computationally illustrated slope detection, phase locking, and coincidence detection properties both in the full model and in planar reductions \cite{meng2012}.  Note that we have not considered phase locking and coincidence detection, and thus the rigorous treatment of these additional behaviors that are thought of as Type III dynamics remains for future work.  

Another viewpoint on neuronal classification is represented within the recent work of Ly and Doiron \cite{ly2017} (see also the references therein).  Building on past work in a similar vein, these authors analyze the three neuronal types in terms of the frequencies of stochastic, oscillatory stimuli to which they respond, referring to Class III neurons as those with high pass selectivity, also known as phasic neurons.  Despite this difference in perspective and the authors' emphasis on stochastic effects, this work is clearly relevant to our study.  In particular, the idea of an input-dependent spiking threshold features in their analysis.  There is also some similarity between these neurons' failure to respond to low frequency inputs, as emphasized by Ly and Doiron, and their failure to respond to inputs with sufficiently shallow slopes, as arises in slope detection, since both of these input types features a gradual ramping up and down of input strength.   Moreover, Ly and Doiron comment on the intuition for why hyperpolarization followed by depolarization can evoke spike responses in these neurons, which they notice from spike-triggered averaging. Our analysis of PIF and slope detection strongly suggest that these properties will be robust to noise, since small perturbations of trajectory paths are unlikely to induce threshold crossings.  As in previous work, however, the inclusion of stochasticity in voltage dynamics would likely smear out these effects, since noise will make spiking responses to specific input patterns probabilistic rather than all-or-none, and a careful study of these effects remains for future work.
Additional authors, notably Gai et al. \cite{gai2010} and Ratt\'{e} et al. \cite{ratte2015}, also have commented on noise-based encoding of slow signals in phasic neurons and encoding of derivatives of input based on subthreshold membrane currents, respectively.
 
Separation of timescales is a property that has arisen in some considerations of threshold crossings in responses to inputs in neuroscience contexts.   A recent example showed numerically that for a bistable two-dimensional system representing synaptic weight dynamics with a saddle separatrix between low and high weight stable critical points, a non-monotonic relationship arises between the stimulus amplitude needed to cross the separatrix and overall stimulus area needed to cross if there is timescale separation between the two variables \cite{gastaldi2019}.  In this case, when the amplitude of the input is too large, its duration is too short to allow a crossing, which resembles the large slope regime in our slope detection analysis.  Importantly, our analysis does not require timescale separation.  In fact, timescale separation may work counter to PIF, since slowing down the adaptation dynamics reduces the extent of adaptation that occurs during the return to rest after the removal of inhibition.  

Although model (\ref{eq:model}) is a hybrid system with a spike threshold and a discrete resetting condition that is applied after a spike occurs, the PIF and slope detection phenomena depend on the conditions under which a spike is fired at all, not on what happens after a spike.  Hence, our analysis is entirely based on the continuous dynamics of the model, and this model was a convenient choice for analysis simply because the lack of a continuous repolarization mechanism allows the use of a simpler vector field than would otherwise be possible.  
The mechanisms that we have shown to give rise to PIF and slope detection are clearly not specific to hybrid models and hence this work establishes that PIF and slope detection are expected to be quite general phenomena.  In fact, the neural model that we consider for $c>0$ fits the Type II classification, in the sense that its resting critical point destabilizes through an Andronov-Hopf bifurcation as input increases.  More generally, PIF requires that a system has a stable resting critical point for a range of nonnegative input values including zero, and inhibition must recruit a positive feedback or remove a negative feedback such that it becomes easier, but not automatic, to fire after the application and subsequent removal of inhibition.  Standard spiking mechanisms involve the activation of a negative feedback potassium current that repolarizes the membrane potential and that deactivates in response to hyperpolarization, so the latter requirement is commonly satisfied.  Failure to respond to inputs with small slopes also requires persistence of a stable resting critical point in the model  that we consider.   On the other hand, failure to respond to inputs with large slopes does not have this requirement; the input level can increase through a value where the rest point loses stability, as long as it drops back down again quickly enough.  Interestingly, in Type I neural models, the stable critical point is lost through a SNIC bifurcation with increasing input, but just beyond this bifurcation, the escape from the subthreshold voltage regime is extremely slow.  This slow escape could potentially extend the failure to respond to shallow inputs even to inputs reaching levels above the SNIC bifurcation, but this idea remains open for careful investigation.

Finally, as we have noted, we have not considered the additional phenomena of phase locking and coincidence detection identified in past work as aspects of Type III dynamics \cite{dodla,meng2012}.  In phase locking, a neuron exposed to an oscillatory input only activates during certain phases of that stimulus.  In coincidence detection, a neuron responds to multiple inputs only if they arrive close enough together in time.  Clearly, these properties are not exclusive to Type III neurons, but numerics have shown that relative to neurons with sustained responses to input, the firing of neurons with phasic responses to input occurs in a more limited time range (for phase locking) and requires more temporally proximal inputs  (for coincidence detection) \cite{meng2012}.  A previous study used an input-dependent spiking threshold to rigorously study coincidence detection in certain planar neuronal models featuring a voltage differential equation coupled to a second equation for the decay of the input strength \cite{rubin2006} (see also \cite{wang2011}).  Based on those results and our current work, we expect that similar geometric dynamical systems ideas should be useful for future studies of necessary and sufficient conditions for phase locking and coincidence detection in neural models.

\vskip 0.3cm

\noindent\textbf{Acknowledgements.} JR received support from NSF award DMS1612913.  JS-R  was supported by NCN (National Science Centre, Poland) grant no.~2019/35/D/ST1/02253. JT received support from NSF award DMS-1951369.

\bibliographystyle{siamplain}
\bibliography{PIF}

\end{document}